\newcommand{\cal}[1]{\mathcal{#1}}
\newcommand{\cL}{\cal L}
\newcommand{\cM}{\cal M}
\newcommand{\cX}{\cal X}
\newcommand{\F}{{\mathbb F}}
\newcommand{\Z}{{\mathbb Z}}
\newcommand{\X}{{\mathcal X}}
\def \F{{\mathbb F}}
\def \Z{{\mathbb Z}}
\def \lm{\textrm{lm}}
\def\M{\mathcal{M}}  % just in case we decide to change the notation for the set of monomials.
\def\fq{\mathbb{F}_q}
\def\fqr{\mathbb{F}_{q^r}}
\newtheorem{theorem}{Theorem}[section]
\newtheorem{lemma}[theorem]{Lemma}
\newtheorem{proposition}[theorem]{Proposition}
\newtheorem{corollary}[theorem]{Corollary}
\theoremstyle{definition}
\newtheorem{definition}[theorem]{Definition} 
\newtheorem{remark}[theorem]{Remark}
\newtheorem{example}[theorem]{Example}
\theoremstyle{plain}
\newcommand{\hull}{\operatorname{Hull}}
\newcommand{\rmv}[1]{}
\begin{document}

%%%%%%%%%%%%%%%%%%%%%%%Topmatter

%Monomial-cartesian codes and their duals, with applications to LCD codes, quantum codes, and locally recoverable codes
\title[Decreasing norm-trace codes]{Decreasing norm-trace codes}
\author{C\'icero Carvalho}
\address[C\'icero Carvalho]{Faculdade de Matem\'{a}tica \\ Universidade Federal 
de Uberl\^{a}ndia\\ Uberl\^{a}ndia, MG Brazil}
\email{cicero@ufu.br}

\author{Hiram H. L\'opez}
\address[Hiram H. L\'opez]{Department of Mathematics\\ Cleveland State University\\ Cleveland, OH USA}
\email{h.lopezvaldez@csuohio.edu}

\author{Gretchen L. Matthews}
\address[Gretchen L. Matthews]{Department of Mathematics\\ Virginia Tech\\ Blacksburg, VA USA}
\email{gmatthews@vt.edu}
\thanks{The first author was partially supported by FAPEMIG APQ-00864-21. The second author was 
partially supported by NSF DMS-2201094. The third author was partially 
supported by NSF DMS-2201075 and the Commonwealth Cyber Initiative.}
\keywords{Evaluation codes, decreasing monomial codes, norm-trace curves, repair scheme, dual codes.}
\subjclass[2010]{94B05; 11T71; 14G50}

\begin{abstract}
The decreasing norm-trace codes are evaluation codes defined by a set of monomials closed under divisibility and the rational points of the extended norm-trace curve. In particular, the decreasing norm-trace codes contain the one-point algebraic geometry (AG) codes over the extended norm-trace curve. We use Gr\"obner basis theory and find the indicator functions on the rational points of the curve to determine the basic parameters of the decreasing norm-trace codes: length,  dimension, and minimum distance. We also obtain their dual codes. We give conditions for a decreasing norm-trace code to be a self-orthogonal or a self-dual code. We provide a linear exact repair scheme to correct single erasures for decreasing norm-trace codes, which applies to higher rate codes than the scheme developed by Jin, Luo, and Xing (IEEE Transactions on Information Theory {\bf 64} (2), 900-908, 2018) when applied to the one-point AG codes over the extended norm-trace curve.
\end{abstract}

\maketitle

%%%%%%%%%%%%%%%%%%%%%%%%%%%%%%%%%%%%%%%%%%%%%%%%%%%%%%%%%%%%%%%%%%%%%%%%%%%%%%%%%%%%%%%%%%%%%%%%%%%%%%%%%%%%%%%%%%%%%%%%%%%%%%%%%%%%%%%%%%%%%%%%%%%%%%%%%%%%%%%%%%%%%%%%%%%%%%
\section{Introduction}
Decreasing monomial codes, which are evaluation codes in which the set of monomials is closed under divisibility, were introduced by Bardet, Dragoi, Otmani, and Tillich in \cite{bardet2016} to algebraically analyze the polar codes defined by Arikan \cite{arikan}. The families of monomials satisfying a closure property also appeared in an earlier construction for optimized evaluation codes \cite{BO}.

The classical families of Reed-Solomon and Reed-Muller codes are decreasing monomial codes and have amply motivated the study of wider classes of decreasing monomial codes.  Decreasing monomial-Cartesian codes, also known as variants of Reed-Muller codes over finite grids, are  more general families than Reed-Muller codes that have been studied due to their applications to certain symmetric channels~\cite{polar_decreasing}, distributed storage systems~\cite{lmv}, and efficient decoding algorithms~\cite{STV}.

In~\cite{Eduardo_polar}, 
Camps, Mart\'{i}nez-Moro, and Rosales
 introduced Vardøhus codes, which are polar codes defined by kernels from castle curves. They proved that Vardøhus codes are polar codes for a discrete memoryless channel that is symmetric with respect to the field operations. \rmv{The authors also studied the dual and provided a minimum distance bound for the family of decreasing castle codes.}

In this paper, we study decreasing norm-trace codes, which are decreasing monomial codes where the evaluation points are the rational points of the extended norm-trace curve. As a consequence of~\cite{Eduardo_polar} and the fact that the extended norm-trace curve is Castle, the decreasing norm-trace codes can be considered as polar codes for discrete memoryless channels symmetric with respect to the field operations.

Let $\fqr$ be the finite field with $q^r$ elements. Norm-trace codes are defined using the {\it norm-trace curve} which is an affine curve over $\F_{q^r}$ defined by \[ N(x)=Tr(y) \]
where $N(x)$ is the norm and $Tr(y)$ is the trace, both taken with respect to the 
extension $\F_{q^r}/\F_q$. Let $u$ be a positive integer such that $u \mid \frac{q^r - 1}{q - 1}$. The {\it extended norm-trace curve}, denoted by $\cX_u$, is the affine curve over $\F_{q^r}$ defined by the equation
\[
x^{u} = y^{q^{r - 1}} + y^{q^{r - 2}} + \cdots + y.
\]
We focus in this work on decreasing norm-trace codes, which are codes defined by evaluating monomials on the rational points of the curve $\cX_u$. We now give more details.

Enumerate the rational points on $\cX_u$ so that $\cX_u = \left\{P_1,\ldots,P_n \right\} \subseteq \fqr^2$.
The \textit{evaluation map}, denoted ${\rm ev}$, is the $\fqr$-linear map given by  
$$
\begin{array}{lccc}
{\rm ev}\colon &\fqr[x,y] &\rightarrow& \fqr^{n}\quad \\
&f & \mapsto& \left(f(P_1),\ldots,f(P_n)\right).
\end{array}
$$
Let $\mathcal{M} \subseteq \fqr[x,y]$ be a set of monomials closed under 
divisibility, meaning that if $M\in \mathcal{M}$ and $M^\prime$ divides $M$, 
then $M^\prime \in \mathcal{M}$. Let $\mathcal{L}$ be the 
$\fqr$-subspace of $\fqr[x,y]$ generated by the set $\mathcal{M}$. We call the image of 
$\mathcal{L}$ under the evaluation map, denoted by ${\rm ev}(\mathcal{M})$, a {\it decreasing norm-trace code}. 
%These codes are related to affine varieties codes, as defined by Fitzgerald 
%and Lax~\cite{flax}. \glm{explain how they are different} 
We can see that the extended norm-trace codes introduced and recently studied 
in \cite{bras-amor} and \cite{Heera-Pin} are particular instances of decreasing 
norm-trace codes; norm-trace codes and generalizations also appear in \cite{BO}. Moreover, we check later that the family of decreasing 
norm-trace codes contains, as a specific case, the family of one-point 
geometric Goppa codes over the Hermitian curve and the more general norm-trace 
curve.

We organize this paper as follows. In Section~\ref{preli}, we describe the vanishing ideal $I_{\cX_u}$ of the extended norm-trace curve $\cX_u$  (Lemma~\ref{nts}), which is the ideal of all polynomials that vanish on $\cX_u$. We recall essential concepts from the Gr\"obner basis theory, such as the footprint of an ideal, and determine a Gr\"obner basis for $I_{\cX_u}$ (Proposition~\ref{22.06.03}) with respect to the lexicographic order.

The main result of Section~\ref{standard} shows the standard indicator function of every rational point of the extended norm-trace curve $\cX_u$ (Theorem~\ref{22.06.05}). Given a rational point $P$ on $\cX_u$, a standard indicator function $f_P$ is a linear combination of monomials that belong to the footprint of $I_{\cX_u}$ such that $f_P(P)=1$ and $f_P(P^\prime)=0$ for every other rational point $P^\prime \neq P$ of $\cX_u$. 

In Section~\ref{decreasing}, we formally introduce decreasing norm-trace codes (Definition~\ref{22.03.11}). We determine their basic parameters, such as the length, dimension, and minimum distance (Theorem~\ref{22.06.13}). We show that these decreasing monomial codes generalize the one-point AG codes over the norm-trace curve.

In Section~\ref{22.09.14}, we give an explicit expression for the dual of a decreasing norm-trace code (Theorem \ref{22.03.15}) in terms of the complement of the {set of} monomials. The hull of a linear code is the intersection of the code with its dual. The hull has several applications, e.g., it has been used to classify finite projective planes~\cite{assmus} and to construct entanglement-assisted quantum error-correcting codes~\cite{guenda}. We show instances where the hull of a decreasing norm-trace code is computed explicitly (Theorem \ref{22.06.12}). We also give conditions on the set of monomials, so that the decreasing norm-trace code is a self-orthogonal or a self-dual code.

In Section~\ref{erasure}, we apply our results to study linear repair schemes for decreasing norm-trace codes. A repair scheme is an
algorithm that recovers the value at any entry of a codeword using limited information from 
the values at the other entries. After presenting the basic definitions of this theory, we prove results that show the existence of a repair scheme for decreasing norm-trace codes (Theorem~\ref{22.06.08}). We close with some conclusions at the end of Section~\ref{conclusion}.

References for vanishing ideals and related algebraic concepts used in this work are \cite{CLO1, Eisen, harris, monalg}.
 
%%%%%%%%%%%%%%%%%%%%%%%%%%%%%%%%%%%%%%%%%%%%%%%%%%%%%%%%%%%%%%%%%%%%%%%%%%%%%%%%%%%%%%%%%%%%%%%%%%%%%%%%%%%%%%%%%%%%%%%%%%%%%%%%%%%%%%%%%%%%%%%%%%%%%%%%%%%%%%%%%%%%%%%%%%%%%%
\section{Preliminaries}\label{preli}

Let $\fq$ be the finite field with $q$ elements and $r \geq 2$ an integer. Define
the polynomials $N(x):=x^{\frac{q^r-1}{q-1}}$ and $Tr(y):=y^{q^{r - 1}} + 
y^{q^{r - 2}} + \cdots + y^q + y$ in $\fqr[x,y]$. The {\it trace} with respect to the 
extension $\F_{q^r}/\F_q$  is  the map
$$
\begin{array}{lccc}
Tr: & \F_{q^r} & \to & \F_q \\
& \alpha & \mapsto & Tr(\alpha).
\end{array}
$$
The {\it norm} with respect to the extension $\F_{q^r}/\F_q$  is  the 
map
$$
\begin{array}{lccc}
N: & \F_{q^r} & \rightarrow & \F_q \\
& \alpha & \mapsto & N(\alpha).
\end{array}
$$
The {\it norm-trace curve}, denoted by $\mathcal{X}$, is the affine plane curve over $\fqr$ given by the equation 
\[
x^{\frac{q^r - 1}{q - 1}} = y^{q^{r - 1}} + y^{q^{r - 2}} + \cdots + y.
\]
The curve $\mathcal{X}$ has been extensively studied in the literature to construct linear codes~\cite{bras-amor, geil, kim_boran, nt_lifted, MTT_08}. We are interested in a slightly more general curve. Let $u$ be a positive integer such that $u \mid \frac{q^r - 1}{q - 1}$.
The {\it extended norm-trace curve}, denoted by $\cX_u$, is the affine curve over $\F_{q^r}$ defined by the equation
\[
x^{u} = y^{q^{r - 1}} + y^{q^{r - 2}} + \cdots + y.
\]
We use the rational points of the curve $\cX_u$ to construct a family of decreasing evaluation codes, which contains, as a particular case, the extended norm-trace codes~\cite{bras-amor, Heera-Pin}. \rmv{ Before introducing the new family of evaluation codes, we need better to understand the curve $\cX_u$ and its properties.}

%\subsection{Vanishing ideal and a Gr\"{o}bner basis}
Results from  Gr\"{o}bner bases theory have been used in coding theory for some time to determine parameters of codes (see, e.g., \cite{geil, geil2, carvalho2015}). We now recall important concepts and results from this theory.

Let $\M$ be the set of monomials of $\fqr[x_1, \ldots, x_m]$. A {\em monomial order} $\prec$ on $\M$ is a total order where 1 is the least monomial and if $M_1 \prec M_2$, then $M M_1 \prec M M_2$, for all $M, M_1, M_2 \in \M$. Fix a monomial order in $\M$ and let $f$ be a nonzero polynomial in $\fqr[x_1,\ldots, x_m]$. The greatest monomial which 
appears in $f$ is called the {\em leading monomial} of $f$, denoted by $\lm(f)$.
Given an ideal $I \subseteq \fqr[x_1, \ldots, x_m]$, a {\em Gr\"{o}bner basis} for $I$ is a set $\{f_1, \ldots, f_s\} 
\subseteq I$ such that for every polynomial $f \in I \setminus \{0\}$, we have that $\lm(f)$ is a multiple of $\lm(f_i)$ for some $i \in \{1, \ldots, s\}$. The Gr\"{o}bner basis concept was introduced in the Ph.D. thesis of Bruno Buchberger (see \cite{buchberger}), in which the author proves that every ideal admits a Gr\"{o}bner basis (w.r.t.\ a fixed monomial order) and that if $\{f_1, \ldots, f_s\}$ is a Gr\"{o}bner basis for $I$, then $I = (f_1, \ldots, f_s)$.

Let $\{f_1, \ldots, f_s\}$ be a Gr\"{o}bner basis for $I$. The {\em footprint} of $I$ is the set $\Delta_{\prec}(I)$ of monomials which are not multiples of $\lm(f_i)$ for all $i = 1, \ldots, n$. One of the main results in 
Buchberger's thesis states that the set of classes $\{ M + I \mid M \in \Delta_{\prec}(I) \} 
\subseteq \fqr[x_1, \ldots, x_m]/I$ is a basis for $\fqr[x_1, \ldots, x_m]/I$ as a $\fqr$-vector space.  

We now define an ideal associated with the extended norm-trace curve $\cX_u$:
$$I_{\cX_u} := ( Tr(y) - x^u, x^{q^r} - x, y^{q^r} - y ) \subseteq \fqr[x,y].$$
Next, we consider some relevant properties of this ideal. 

\begin{lemma} \label{nts}
The ideal $I_{\cX_u}$ is radical and is the ideal of all polynomials which vanish on $\cX_u$.
\end{lemma}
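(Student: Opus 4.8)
The inclusion $I_{\cX_u} \subseteq I(\cX_u)$, where $I(\cX_u)$ denotes the ideal of all polynomials vanishing on the rational points of $\cX_u$, is immediate: every rational point $P=(a,b)\in\cX_u$ has $a,b\in\fqr$ and satisfies $Tr(b)=a^u$, so each of the three generators $Tr(y)-x^u$, $x^{q^r}-x$, and $y^{q^r}-y$ vanishes at $P$. The real content of the lemma is the reverse inclusion together with radicality, and the plan is to obtain both at once by analyzing the quotient ring rather than by invoking the Gr\"obner basis computed later.

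First I would pass to $R:=\fqr[x,y]/(x^{q^r}-x,\,y^{q^r}-y)$ and identify it with the ring of $\fqr$-valued functions on $\fqr^2$. Concretely, the map sending the class of $f$ to the tuple $(f(a,b))_{(a,b)\in\fqr^2}$ is a ring homomorphism $R\to\prod_{(a,b)\in\fqr^2}\fqr$, and I claim it is an isomorphism. Surjectivity follows from Lagrange interpolation: for fixed $(a_0,b_0)$ the polynomial
\[
\prod_{c\in\fqr\setminus\{a_0\}}\frac{x-c}{a_0-c}\cdot\prod_{d\in\fqr\setminus\{b_0\}}\frac{y-d}{b_0-d}
\]
represents the indicator function of $(a_0,b_0)$, so its class hits the corresponding standard basis vector. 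Since both sides have $\fqr$-dimension $q^{2r}$—a basis of $R$ being the classes of the monomials $x^i y^j$ with $0\le i,j<q^r$—surjectivity forces bijectivity. Equivalently, this is the standard fact that $(x^{q^r}-x,\,y^{q^r}-y)$ is exactly the vanishing ideal of $\fqr^2$.

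Under this identification the class of $g:=Tr(y)-x^u$ corresponds to the function $P=(a,b)\mapsto Tr(b)-a^u$, whose zero set is precisely $\cX_u$; moreover the image of $I_{\cX_u}$ in $R$ is the principal ideal generated by that class, since the other two generators die in $R$. In a product of fields $\prod_{P}\fqr$, the principal ideal generated by a tuple $(g_P)_P$ equals $\{(h_P)_P : h_P=0 \text{ whenever } g_P=0\}$, because each nonzero coordinate is a unit. Hence the image of $I_{\cX_u}$ is exactly the ideal of functions vanishing on $\cX_u$, and
\[
\fqr[x,y]/I_{\cX_u}\;\cong\;\prod_{P\in\cX_u}\fqr .
\]
This quotient is a finite product of fields, hence reduced, so $I_{\cX_u}$ is radical. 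Pulling back, a polynomial $f$ lies in $I_{\cX_u}$ if and only if its class in $R$ lies in that ideal, i.e.\ if and only if $f(P)=0$ for all $P\in\cX_u$; thus $I_{\cX_u}=I(\cX_u)$.

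The only genuinely nontrivial step is the identification $R\cong\prod_{\fqr^2}\fqr$—equivalently, that the two field equations cut out the coordinatewise vanishing ideal of the whole plane—together with the elementary observation that principal ideals in a product of fields are determined by their support; everything else is formal. An alternative route to radicality alone is Seidenberg's lemma, since $x^{q^r}-x$ and $y^{q^r}-y$ are squarefree univariate members of $I_{\cX_u}$ over the perfect field $\fqr$, but the function-ring description is preferable because it delivers the equality $I_{\cX_u}=I(\cX_u)$ simultaneously and independently of the later Gr\"obner basis.
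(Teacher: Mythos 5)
Your proof is correct, and it takes a genuinely different route from the paper's. The paper argues in two cited steps: since $x^{q^r}-x$ and $y^{q^r}-y$ lie in $I_{\cX_u}$, the ideal is zero-dimensional and contains a squarefree univariate polynomial in each variable, so it is radical by Seidenberg's lemma (cited as \cite[Prop.\ 8.14]{becker}); the identification with the vanishing ideal then follows from a finite-field Nullstellensatz (\cite[Thm.\ 2.3]{ghorpade}). You instead compute the quotient ring explicitly: identifying $\fqr[x,y]/(x^{q^r}-x,\,y^{q^r}-y)$ with $\prod_{\fqr^2}\fqr$ via evaluation and observing that principal ideals in a product of fields are determined by their support, you get $\fqr[x,y]/I_{\cX_u}\cong\prod_{P\in\cX_u}\fqr$, which yields radicality and $I_{\cX_u}=I(\cX_u)$ in one stroke, self-contained and without the Gr\"obner basis of Proposition~\ref{22.06.03}. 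As a bonus your argument already gives $|\cX_u|=\dim_{\fqr}\fqr[x,y]/I_{\cX_u}$, which the paper only extracts later from the footprint. One cosmetic point: when you assert that the monomials $x^iy^j$ with $0\le i,j<q^r$ form a basis of $R$ before proving injectivity, you should phrase it as ``these monomials span $R$, so $\dim R\le q^{2r}$, and surjectivity onto a $q^{2r}$-dimensional space forces equality''; as written there is a whiff of circularity, though the fix is immediate. The trade-off is that your proof is longer but elementary and citation-free, whereas the paper's is two lines at the cost of two external references.
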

\begin{proof}
Since $x^{q^r} - x, y^{q^r} - y \in I_{\cX_u}$,  for any monomial order, the 
footprint is finite and consists of monomials of the form $x^a y ^b$ where 
$a$ and $b$ are less than $q^r$. Hence, $I_{\cX_u} $ is a 
zero-dimensional ideal. Thus, $I_{\cX_u} $ is a radical ideal by \cite[Prop. 8.14]{becker}. From \cite[Thm. 2.3]{ghorpade}, it follows that $I_{\cX_u}$ is the ideal 
of all polynomials which vanish on $\cX_u$. The fact that $I_{\cX_u}$ is radical also follows immediately from \cite[Thm. 2.3]{ghorpade} being a vanishing ideal. 
\end{proof}

The following result is a particular case of \cite[Theorem 21]{Heera-Pin}. We add a detailed proof here for completeness.
\begin{proposition}\label{22.06.03}
The set $\{ Tr(y) - x^u, x^{(q - 1)u + 1} - x\}$ is a Gr\"obner basis 
for $I_{\cX_u}$ with respect to the lexicographic order with $x \prec y$. Moreover, $\mid \X_u\mid = q^{r-1}((q - 1)u + 1)$. In particular, if $u = \frac{q^r - 1}{q - 1}$, then $I_{\cX} $ is the vanishing ideal of $\mathcal{X}$, the set
$\{Tr(y)-N(x),x^{q^r} - x\}$
is a Gr\"obner basis for $I_{\cX} $ with respect to the lexicographic order with $x \prec y$,  and $\mid \X \mid = q^{2r-1}$.
\end{proposition}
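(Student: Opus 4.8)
The plan is to exploit the fact that the two proposed generators have leading monomials in different variables, which makes the Gröbner basis verification immediate, and then to identify the ideal and extract the point count from the footprint.

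First I would record the leading monomials under the lexicographic order with $x \prec y$. Here one compares the exponent of $y$ first, so in $Tr(y) - x^u = y^{q^{r-1}} + \cdots + y - x^u$ the leading monomial is $y^{q^{r-1}}$, and in $x^{(q-1)u+1} - x$ it is $x^{(q-1)u+1}$. Since $q \ge 2$ and $u \ge 1$ these are genuine (degree $\ge 1$) leading terms, and they are coprime because they involve disjoint variables. By the standard Buchberger criterion (the $S$-polynomial of two polynomials with coprime leading monomials reduces to zero), $\{Tr(y) - x^u,\ x^{(q-1)u+1}-x\}$ is automatically a Gröbner basis of the ideal $J$ it generates, with no $S$-polynomial computation required. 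All the content is therefore in proving $J = I_{\cX_u}$.

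For $J = I_{\cX_u}$ I would establish both inclusions algebraically. The inclusion $J \subseteq I_{\cX_u}$ reduces, via Lemma~\ref{nts}, to checking that $x^{(q-1)u+1}-x$ vanishes on $\cX_u$: at a point $(a,b)$ with $a \neq 0$ we have $a^u = Tr(b) \in \fq^*$, hence $(a^u)^{q-1}=1$ and $a^{(q-1)u+1}=a$; the case $a=0$ is trivial. For $I_{\cX_u} \subseteq J$ I would show the three generators of $I_{\cX_u}$ lie in $J$. The generator $Tr(y)-x^u$ is in $J$ by definition. Because $u \mid \frac{q^r-1}{q-1}$, we get $(q-1)u \mid q^r-1$, so writing $q^r-1 = (q-1)u\,t$ and iterating $x^{(q-1)u+1}\equiv x \pmod J$ yields $x^{q^r}\equiv x$, giving $x^{q^r}-x \in J$. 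For the last generator I would use the Frobenius identity $Tr(y)^q - Tr(y) = y^{q^r}-y$; reducing modulo $J$ via $Tr(y)\equiv x^u$ turns the left-hand side into $x^{uq}-x^u = x^{u-1}\bigl(x^{(q-1)u+1}-x\bigr) \in J$, so $y^{q^r}-y \in J$. Hence $I_{\cX_u} \subseteq J$, and therefore $J = I_{\cX_u}$ with the stated Gröbner basis.

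Finally I would read off the footprint of $J = I_{\cX_u}$, namely the monomials $x^a y^b$ with $0 \le a \le (q-1)u$ and $0 \le b \le q^{r-1}-1$; these number $q^{r-1}((q-1)u+1)$ and form a basis of $\fqr[x,y]/I_{\cX_u}$. Since $I_{\cX_u}$ is the radical vanishing ideal of the finite set $\cX_u$ (Lemma~\ref{nts}), this quotient is isomorphic to $\fqr^{\,|\cX_u|}$, so $|\cX_u| = q^{r-1}((q-1)u+1)$. The specialization $u = \frac{q^r-1}{q-1}$ is then pure substitution into the general case: $x^u = N(x)$, $(q-1)u+1 = q^r$ so $x^{(q-1)u+1}-x = x^{q^r}-x$, and $|\cX| = q^{r-1}\cdot q^r = q^{2r-1}$. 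I expect the reverse inclusion $I_{\cX_u}\subseteq J$—in particular clearing the generator $y^{q^r}-y$ through the trace identity and the divisibility $(q-1)u \mid q^r-1$—to be the only step needing care; the rest is bookkeeping.
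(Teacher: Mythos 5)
Your proposal is correct and follows essentially the same route as the paper: coprime leading monomials $y^{q^{r-1}}$ and $x^{(q-1)u+1}$ give the Gröbner basis property for free, the ideal identification comes down to showing $x^{(q-1)u+1}-x$ vanishes on $\cX_u$ and that $x^{q^r}-x$ and $y^{q^r}-y$ lie in $J$ via the divisibility $(q-1)u \mid q^r-1$ and the identity $Tr(y)^q - Tr(y) = y^{q^r}-y$, and the point count is the footprint count for a radical zero-dimensional ideal. The paper merely writes out the explicit polynomial identities (e.g.\ $\bigl((Tr(y)-x^u)^{q-1}-1\bigr)(Tr(y)-x^u) + x^{u-1}(x^{(q-1)u+1}-x) = y^{q^r}-y$) where you argue by congruences modulo $J$; the content is identical.
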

\begin{proof}
Let $(\alpha,\beta)$ be a point on $\X_u$. As $\alpha^u = Tr(\beta)$,  $\alpha^u \in \mathbb{F}_q$. 
Thus, the polynomial $x^{(q - 1)u + 1} - x = x( (x^u)^{q - 1} - 1)$ vanishes at 
all points of $\X_u$ and from Lemma \ref{nts},  $x^{(q - 1)u + 1} - x \in I_{\cX_u}$.
To prove that $I_{\cX_u}  = ( Tr(y) - x^u, x^{(q - 1)u + 1} - x )$, we show that $x^{q^r} 
- x, y^{q^r} - y \in ( Tr(y) - x^u, x^{(q - 1)u + 1} - x )$. Indeed, let $v$ be the 
positive integer such that $u v = \frac{q^r - 1}{q - 1}$. Then one easily checks 
that
\begin{eqnarray*}
&&\left(x^{(q - 1)u(v-1)} + x^{(q - 1)u(v-2)} + \cdots + x^{(q - 1)u} + 1 
\right) ( x^{(q - 1)u + 1} - x) = x^{q^r} - x \\
\text{and} && \left( (Tr(y) - x^u)^{q - 1} - 1\right) (Tr(y) - x^u) + x^{u - 1}(x^{(q - 
1)u + 1} - x)  \\
&& = (Tr(y) - x^u)^q - (Tr(y) - x^u) + x^{uq} - x^u \\ 
&& = Tr(y)^q - x^{uq} - Tr(y) + x^u + x^{uq} - x^u \\
&& = (y^{q^{r - 1}} + y^{q^{r - 2}} + \cdots + y)^q - (y^{q^{r - 1}} + y^{q^{r 
- 2}} + \cdots + y) \\
&& = y^{q^r}  - y.
\end{eqnarray*}
Since the leading monomials $\lm(Tr(y)) = y^{q^{r - 1}}$ and $\lm(x^{(q - 1)u + 1} - x) = 
x^{(q - 1)u + 1}$ are coprime,  $\{ Tr(y) - x^u, x^{(q - 1)u + 1} - x\}$ 
is a Gr\"obner basis for $I_{\cX_u}$ according to 
\cite[Prop. 4, p. 104]{CLO1}.  Since $I_{\cX_u} $ is a 
radical ideal and $\fqr$ is a perfect field, $ \mid \X_u \mid = \mid 
\Delta_{\prec}(I_{\cX_u} ) \mid$  
\cite[Thm. 8.32]{becker}.
\end{proof}

\rmv{
Observe that we have the following result when $u = \frac{q^r - 1}{q - 1}$ in Proposition~\ref{22.06.03}.
\begin{corollary}
If $\cX$ is the norm-trace curve $x^{\frac{q^r - 1}{q - 1}} = y^{q^{r - 1}} + y^{q^{r - 2}} + \cdots + y$,
and $I_{\cX} $ is the vanishing ideal of $\mathcal{X}$, then the set
$\{Tr(y)-N(x),x^{q^r} - x\}$
is a Gr\"obner basis for $I_{\cX} $ with respect to the lexicographic order with $x \prec y$. Moreover, $\mid \X \mid = q^{2r-1}$.
\end{corollary}}

\section{Standard indicator functions}\label{standard}
Some of the properties of the decreasing evaluation codes depend on the indicator functions of the curve $\X_u$. For a standard reference on indicator functions in the context of evaluation codes, please see \cite[Sections 4 and 5]{dual}. Take $n := \mid \X_u \mid$. One may show (see \cite{flax} or \cite[Prop. 3.7]{carvalho2015}) that the following linear transformation is an isomorphism
\[ 
\begin{array}{lccc}
\varphi \colon &\fqr[x, y]/I_{\X_u} &\rightarrow& \fqr^{n}\quad \\
&f + I_{\X_u} & \mapsto& \left(f(P_1),\ldots,f(P_n)\right).
\end{array}
\]
So, for each $P \in \X_u$, there exists an unique class $g_P + I_{\X_u}$ such that $g_P(P) = 1$ and $g_P(Q) = 0$, for 
every $Q \in \X_u \setminus \{ P \}$. Since $\{ M + I_{\cX_u} \mid M \in \Delta_{\prec}\left(I_{\cX_u}\right) \}$ is a 
basis for $\fqr[x,y]/I_{\X_u}$ as an $\fqr$-vector space (see e.g.\ \cite[Prop. 6.52]{becker}), there is a unique $\fqr$-linear combination  of monomials in $\Delta_{\prec}(I_{\cX_u})$, which we denote by $f_P$, such that $f_P(P) = 1$ and $f_P(Q) = 0$, for every $Q \in \X_u$. We call this polynomial $f_P$ the {\em standard indicator function of $P$}. The existence and uniqueness of the standard indicator function $f_P$ may also be seen as a consequence of \cite[Lemma 4.2(c)]{dual}. 

We first describe the set of points of $\cX_u$  in a way that will be useful in the next section.
\begin{lemma}\label{points}
For every $\gamma \in \fq$, define
$A_\gamma := \{ (\alpha,\beta) \in \mathbb{A}^2(\fqr) \, | \, Tr (\beta) = \alpha^u = \gamma\}$. Then, we have
$\cX_u = \bigcup_{\gamma \in \fq} A_\gamma$. Moreover, $\mid A_0\mid = q^{r - 1}$ and 
$\mid A_\gamma\mid = u q^{r - 1}$ for all $\gamma \in \fq^*$.
\end{lemma}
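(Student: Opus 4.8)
The plan is to first establish the set-theoretic decomposition and then count each piece by decoupling the two coordinate conditions, since the constraint on $\alpha$ and the constraint on $\beta$ are independent. For the decomposition, I would observe that a point $(\alpha,\beta) \in \mathbb{A}^2(\fqr)$ lies on $\cX_u$ exactly when $\alpha^u = Tr(\beta)$. Because $Tr(\beta) \in \fq$ for every $\beta \in \fqr$, the common value $\gamma := \alpha^u = Tr(\beta)$ is forced to lie in $\fq$; hence every point of $\cX_u$ belongs to $A_\gamma$ for this particular $\gamma$, and conversely each $A_\gamma$ is visibly contained in $\cX_u$. This gives $\cX_u = \bigcup_{\gamma \in \fq} A_\gamma$, and since $\gamma$ is uniquely determined by the point, the union is in fact disjoint.

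For the cardinalities, I would note that the defining conditions of $A_\gamma$ separate into a condition on $\alpha$ alone and a condition on $\beta$ alone, so that
\[ \mid A_\gamma \mid = \mid \{\alpha \in \fqr : \alpha^u = \gamma\}\mid \cdot \mid\{\beta \in \fqr : Tr(\beta) = \gamma\}\mid. \]
The trace factor is the easy one: $Tr$ is a surjective $\fq$-linear map, so each of its $q$ fibers has exactly $q^{r-1}$ elements, independently of $\gamma$. When $\gamma = 0$, the only solution of $\alpha^u = 0$ is $\alpha = 0$, so the power factor is $1$ and $\mid A_0 \mid = q^{r-1}$.

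The crux, and the only place I expect any real obstacle, is the case $\gamma \in \fq^*$, where I must show both that $\gamma$ is a $u$-th power in $\fqr$ and that it has exactly $u$ preimages. Here I would use that $u \mid \frac{q^r-1}{q-1} \mid q^r - 1$, so that on the cyclic group $\fqr^*$ the $u$-th power map $\alpha \mapsto \alpha^u$ has kernel of size $\gcd(u, q^r-1) = u$ and image equal to the unique subgroup of order $\frac{q^r-1}{u}$. Writing $uv = \frac{q^r-1}{q-1}$ as in Proposition~\ref{22.06.03}, one has $\frac{q^r-1}{u} = (q-1)v$, so $(q-1) \mid \frac{q^r-1}{u}$; since $\fqr^*$ is cyclic, its unique subgroup of order $q-1$, namely $\fq^*$, is therefore contained in the image of the power map. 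Hence every $\gamma \in \fq^*$ is a $u$-th power and has exactly $u$ preimages, making the power factor $u$ and $\mid A_\gamma \mid = u\, q^{r-1}$. Combining the two factors in each case yields the stated counts.
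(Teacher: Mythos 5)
Your proposal is correct, and the overall skeleton (disjoint decomposition over $\gamma \in \fq$, then counting $A_\gamma$ as a product of an $\alpha$-count and a $\beta$-count, with the trace fiber contributing $q^{r-1}$ by surjectivity and linearity) matches the paper. Where you genuinely diverge is in the key step of counting $\{\alpha \in \fqr : \alpha^u = \gamma\}$ for $\gamma \in \fq^*$. You argue group-theoretically: since $u \mid q^r-1$, the $u$-th power map on the cyclic group $\fqr^*$ has kernel of order $u$ and image the unique subgroup of order $\frac{q^r-1}{u} = (q-1)v$, which contains the unique subgroup $\fq^*$ of order $q-1$; hence every $\gamma \in \fq^*$ has exactly $u$ preimages. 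The paper instead argues with polynomials: it notes that every first coordinate of a point of $\cX_u$ is a root of $x^{(q-1)u+1}-x$ (via Proposition~\ref{22.06.03}), that this polynomial divides $x^{q^r}-x$ and so has $(q-1)u+1$ distinct roots, and that each nonzero root satisfies $(\alpha^u)^{q-1}=1$; a degree comparison then forces the identity $x^{(q-1)u+1}-x = x\prod_{\gamma\in\fq^*}(x^u-\gamma)$, from which each factor $x^u-\gamma$ must split with $u$ distinct roots. Both arguments are complete and elementary. Your version is arguably the more standard and transparent way to see that $\fq^*$ consists of $u$-th powers, and it isolates exactly where the hypothesis $u \mid \frac{q^r-1}{q-1}$ enters; the paper's version has the side benefit of producing the explicit factorization of the Gr\"obner basis element $x^{(q-1)u+1}-x$, a fact the authors lean on again later (e.g.\ in the proof of Theorem~\ref{22.06.05}, where $\alpha^u \in \fq^*$ for $\alpha \neq 0$ is cited ``from the proof of Lemma~\ref{points}''). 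If you adopted your proof in the paper, that downstream reference would still be covered, since you also establish $\alpha^u \in \fq^*$ for nonzero $\alpha$ on $\cX_u$.
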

\begin{proof}
If $(\alpha, \beta) \in \cX_u$, then $\alpha$ is a root of $x^{(q - 1)u + 1} - x$ by Proposition~\ref{22.06.03}.
Furthermore,  $x^{(q - 1)u + 1} - x \mid x^{q^r} - x$, so
$x^{(q - 1)u + 1} - x$ has $(q - 1)u + 1$ distinct roots. For each nonzero root $\alpha$, we have $(\alpha^u)^{q- 1} = 1$,
so $\alpha^u \in \fq^*$ and $x^{(q - 1)u + 1} - x$ must be a factor of $ x \prod_{\gamma \in \fq^*}  (x^u - \gamma)$.
Since the last two polynomials have the same degree and are monic,
we actually have $x^{(q - 1)u + 1} - x  = x \prod_{\gamma \in \fq^*} (x^u - \gamma)$.

There are $q^{r - 1}$ elements $\beta$ such that $Tr(\beta) = \alpha^u$ for every $\gamma \in \fq$.
This shows that $A_\gamma \subseteq \cX_u$ for all $\gamma \in \fq$, and that 
$\mid A_0\mid = q^{r - 1}$ and $\mid A_\gamma\mid = u q^{r - 1}$ when $\gamma \in \fq^*$. 
On the other hand,  if $(\alpha, \beta) \in 
\cX_u$ and $\alpha^u = \gamma$, then $(\alpha, \beta) \in A_\gamma$.
\end{proof}

\begin{theorem}\label{22.06.05}
Let $P =(\alpha,\beta) \in \X_u$. The polynomial 
\[ f_P(x,y):= c \left(\frac{ x^{(q - 1)u + 1} - x }{x-\alpha}\right) \left(\frac{Tr(y)-Tr(\beta)}{y-\beta}\right) \]
is the standard indicator function for $P$ where $$c := \begin{cases} -1 & \textnormal{if } \alpha=0\\(-u)^{-1} \in \fq & \textnormal{otherwise}. \end{cases}$$
In particular, $y^{q^{r - 1}-1}x^{(q - 1)u}$ is the leading monomial of the standard indicator function for $P$. 
\end{theorem}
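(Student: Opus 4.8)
The plan is to verify that the displayed $f_P$ satisfies the three properties characterizing a standard indicator function: it is an $\fqr$-linear combination of monomials in the footprint $\Delta_{\prec}(I_{\X_u})$, it takes the value $1$ at $P$, and it vanishes at every other rational point of $\X_u$. The uniqueness recorded just before the statement then forces $f_P$ to be \emph{the} standard indicator function. First I would observe that both quotients are honest polynomials: since $P=(\alpha,\beta)\in\X_u$, Proposition~\ref{22.06.03} shows $\alpha$ is a root of $x^{(q-1)u+1}-x$, so $(x-\alpha)$ divides the first numerator, while $(y-\beta)$ trivially divides $Tr(y)-Tr(\beta)$. For footprint membership, note that $\big(x^{(q-1)u+1}-x\big)/(x-\alpha)$ is a polynomial in $x$ of degree $(q-1)u$ with leading term $x^{(q-1)u}$, and $\big(Tr(y)-Tr(\beta)\big)/(y-\beta)$ is a polynomial in $y$ of degree $q^{r-1}-1$ with leading term $y^{q^{r-1}-1}$. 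Hence every monomial $x^a y^b$ occurring in $f_P$ satisfies $a\le (q-1)u$ and $b\le q^{r-1}-1$; since the leading monomials of the Gr\"obner basis in Proposition~\ref{22.06.03} are $x^{(q-1)u+1}$ and $y^{q^{r-1}}$, such a monomial is divisible by neither and thus lies in $\Delta_{\prec}(I_{\X_u})$. Multiplying the two leading terms yields $y^{q^{r-1}-1}x^{(q-1)u}$ as the leading monomial of $f_P$ in the lexicographic order with $x\prec y$, which is the final assertion.

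Next I would handle vanishing at the other points. Let $Q=(\alpha',\beta')\in\X_u$ with $Q\ne P$. If $\alpha'\ne\alpha$, then $\alpha'$ is a root of $x^{(q-1)u+1}-x$ distinct from $\alpha$, so the first factor of $f_P$ vanishes at $Q$. If $\alpha'=\alpha$, then $\beta'\ne\beta$; using the defining equation of $\X_u$ we get $Tr(\beta')=(\alpha')^u=\alpha^u=Tr(\beta)$, so $\beta'$ is a root of $Tr(y)-Tr(\beta)$ distinct from $\beta$ and the second factor vanishes at $Q$. Either way $f_P(Q)=0$.

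It remains to compute $f_P(P)$ and pin down $c$, and this is the step I expect to require the most care, since the two cases of $c$ come out of a characteristic-$p$ evaluation. Writing $g(x)=(x-\alpha)h(x)$ gives $h(\alpha)=g'(\alpha)$, so the first factor at $\alpha$ equals $\big((q-1)u+1\big)\alpha^{(q-1)u}-1$, while the second factor at $\beta$ equals $\tfrac{d}{dy}Tr(y)\big|_{\beta}=1$, because every term $y^{q^{k}}$ with $k\ge 1$ differentiates to $0$ in characteristic $p$. When $\alpha=0$ the first value is $-1$; when $\alpha\ne 0$, Lemma~\ref{points} gives $\alpha^u\in\fq^*$, so $\alpha^{(q-1)u}=(\alpha^u)^{q-1}=1$ and, using $q\equiv 0\pmod p$, the first value becomes $(q-1)u=-u$ in $\fqr$. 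Thus $f_P(P)=c\cdot(-1)$ when $\alpha=0$ and $f_P(P)=c\cdot(-u)$ otherwise, which equals $1$ precisely for the stated $c$. Finally $(-u)^{-1}$ is well defined and lies in $\fq$ because $\tfrac{q^r-1}{q-1}\equiv 1\pmod p$, so its divisor $u$ is prime to $p$. The genuine obstacle is exactly this last computation, namely the reduction of $(q-1)u+1$ modulo $p$ together with the case split on whether $\alpha=0$; the footprint membership and the vanishing at other points are routine.
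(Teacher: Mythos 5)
Your proof is correct and follows essentially the same route as the paper's: vanishing at $Q\neq P$ via the case split on the first coordinate, and evaluation at $P$ via the formal derivative, with the same characteristic-$p$ reduction $\bigl((q-1)u+1\bigr)\alpha^{(q-1)u}-1=-u$ (or $-1$ when $\alpha=0$) and the same observation that $u\mid\frac{q^r-1}{q-1}$ forces $u\neq 0$ in $\fq$. The only additions are your explicit verification of footprint membership and of the leading monomial, which the paper leaves implicit.
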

\begin{proof}
Observe that $f_P(\alpha,\beta) \neq 0$.
Let $P^\prime=\left(\alpha^\prime,\beta^\prime\right)$ be a point in $\cX_u$ 
different from $P$. If $\alpha\neq \alpha^\prime$, then $\left(\frac{ x^{(q - 1)u 
+ 1} - x }{x-\alpha}\right)\bigg\rvert_{x=\alpha^\prime}=0$ by 
Proposition~\ref{22.06.03}; thus $f_P(P^\prime)=0$. % %
If $\alpha = \alpha^\prime$, then $\beta\neq \beta^\prime$ and 
$Tr(\beta^\prime)=(\alpha^\prime)^u=\alpha^u=Tr(\beta)$. This means that
$\beta^\prime$ is a root of $Tr(y)-Tr(\beta)$ and $ 
\left(\frac{Tr(y)-Tr(\beta)}{y-\beta}\right)\bigg\rvert_{y=\beta^\prime}=0$; 
thus $f_P(P^\prime)=0$. % %
We conclude that $f_P(\alpha',\beta') = 0$ for every $(\alpha', \beta') \in \X_u \setminus \{(\alpha,\beta)\}$.

Let $\beta = \beta_1,\ldots,\beta_{q^{r-1}} \in  \mathbb{F}_{q^r}$ be the distinct roots of 
$Tr(y)-Tr(\beta)$, so that 
$Tr(y)-Tr(\beta)=\prod_{i=1}^{q^{r-1}}(y-\beta_i)$. Taking the formal derivative, we have that
$1=\sum_{j=1}^{q^{r-1}} \prod_{i=1, i\neq j}^{q^{r-1}} (y-\beta_i)$. Thus, 
\[
\left(\frac{Tr(y)-Tr(\beta)}{y-\beta}\right)\bigg\rvert_{y=\beta} = 
\prod_{i= 2}^{q^{r-1}} (\beta -\beta_i) = 1.
\]
Likewise, let $\alpha = \alpha_1, \ldots, \alpha_{u(q - 1) + 1}$ be such that $x^{(q - 1)u + 
1} - x = \prod_{i=1}^{(q - 1)u + 1}(x-\alpha_i)$. Taking the formal derivative, we 
get $( (q - 1)u + 1) x^{(q - 1) u} - 1 = \sum_{j=1}^{(q - 1)u + 1} \prod_{i=1, 
i\neq j}^{(q - 1)u} (x-\alpha_i)$, so 
\[
\left(\frac{ x^{(q - 1)u + 1} - x }{x-\alpha}\right)\bigg\rvert_{x = \alpha} = 
\prod_{i= 2}^{(q - 1)u} (\alpha -\alpha_i) = ( (q - 1)u + 1) \alpha^{(q - 1) u} - 1.
\]
If $\alpha \neq 0$, then we have $\alpha^u \in \fq^*$ from the proof of Lemma \ref{points}. Thus,
\[
( (q - 1)u + 1) \alpha^{(q - 1) u} - 1 = ( (q - 1)u + 1) (\alpha^{u})^{q - 1} - 1 = (q - 
1)u = -u.
\]
As $u \mid \frac{q^r - 1}{q - 1}$, the integer $u$ is not a multiple of $\textrm{char}(\fq)$. Hence $u \neq 0$ in $\fq$.
\end{proof}

\section{Code construction}\label{decreasing}
In this section, we define and compute the basic parameters of a new family of evaluation codes called decreasing norm-trace codes. We then consider their relationship with algebraic geometry codes defined on (extended) norm-trace curves

\subsection{Code parameters}

The \textit{evaluation map}, denoted ${\rm ev}$,
is the $\fqr$-linear map given by  
$$
\begin{array}{lccc}
{\rm ev}\colon &\fqr[x,y] &\rightarrow& \fqr^{n}\quad \\
&f & \mapsto& \left(f(P_1),\ldots,f(P_n)\right),
\end{array}
$$ 
where $\cX_u = \left\{P_1,\ldots,P_n \right\} \subseteq \fqr^2$ and $n:=q^{r-1}((q - 1)u + 1)$.
\rmv{Let $\mathcal{L}$ be a linear subspace of $\fqr[x,y]$ of finite dimension. The image of $\mathcal{L}$
under the evaluation map, denoted ${\rm ev}(\mathcal{L})$, is called an
\textit{evaluation code} on $\cX_u$. When $\mathcal{L}=\fqr\mathcal{M}$ is the subspace of $\fqr[x,y]$ generated by a set of
monomials $\mathcal{M} \subseteq \fqr[x,y]$, the evaluation code on $\cX_u$ is denoted by ${\rm ev}(\mathcal{M})$.}

\begin{definition}\label{22.03.11}
 A {\it decreasing norm-trace code} is an evaluation code ${\rm ev}(\mathcal{M})$ such that $\mathcal{M} \subseteq \fqr[x,y]$ is closed under divisibility, meaning  if $M\in \mathcal{M}$ and $M^\prime$ divides $M,$ then $M^\prime \in \mathcal{M}$.
\end{definition}

\begin{example}\label{22.03.10} \rm
Take $q=3$ and $r=2$. Figure~\ref{22.03.13}~(a) shows the points of the norm-trace curve $\cX$. Let $\textcolor{red}{\mathcal{M}}$ be the set of monomials in $\fqr[x,y]$ whose exponents are the points in Figure~\ref{22.03.13}~(b). Note that $\textcolor{red}{\mathcal{M}}$ is closed under divisibility. Using the coding theory package~\cite{cod_package} for Macaulay2 \cite{Mac2} and Magma \cite{magma}, we obtain that ${\rm ev}(\textcolor{red}{\mathcal{M}})$ is a $[ 27, 10, 15 ]$ decreasing norm-trace code over $\F_9$.
\end{example}
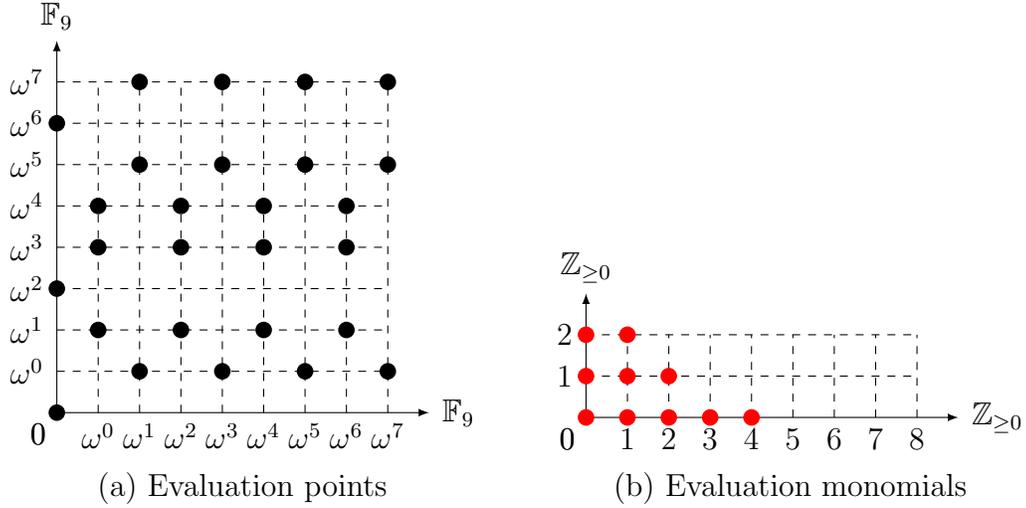
\begin{figure}[h]
\vskip 0cm
\noindent
\begin{minipage}[t]{0.45\textwidth}
\begin{center}
\begin{tikzpicture}[scale=0.55]
%These are the powers that we want on the x- and y-axis.
\def\xinitial{0}
\def\xfinal{7}
\def\yinitial{0}
\def\yfinal{7}

%The following defines the vertical lines
\draw [-latex] (\xinitial,\yinitial)node[below left]{0} -- (\xinitial,\yfinal+2)node[above]{$\F_{9}$};
\foreach \i in {\xinitial,...,\xfinal}
{
\draw [dashed] (\i+1,\yinitial)node[below]{$\omega^{\i}$} -- (\i+1,\yfinal+1)node[right] {};
}

%The following defines the horizontal lines
\draw [-latex] (\xinitial,\yinitial)node[below left]{0} -- (\xfinal +2,\yinitial)node[right]{$\F_{9}$};
\foreach \i in {\yinitial,...,\yfinal}
{
\draw [dashed] (\xinitial,\i+1)node[left]{$\omega^{\i}$} -- (\xfinal+1,\i+1)node[right] {};
}

\foreach \j in {-1,2,6}
{
\fill [color=black](0,\j+1) {circle(.2cm)};
}

\foreach \i in {0,2,4,6}
{\foreach \j in {1,3,4}
{\fill [color=black](\i+1,\j+1) {circle(.2cm)};}}

\foreach \i in {1,3,5,7}
{\foreach \j in {0,5,7}
{\fill [color=black](\i+1,\j+1) {circle(.2cm)};}}

\end{tikzpicture}
\vskip 0cm
(a) Evaluation points
\end{center}
\end{minipage}
\begin{minipage}[t]{0.45\textwidth}
\begin{center}
\begin{tikzpicture}[scale=0.55]
\def\xinitial{1}
\def\xfinal{8}
\def\yinitial{1}
\def\yfinal{2}

%The following defines the vertical lines
\draw [-latex] (\xinitial,\yinitial)node[below left]{0} -- (\xinitial,\yfinal+2)node[above]{$\Z_{\geq 0}$};
\foreach \i in {\xinitial,...,\xfinal}
{
\draw [dashed] (\i+1,\yinitial)node[below]{$\i$} -- (\i+1,\yfinal+1)node[right] {};
}

%The following defines the horizontal lines
\draw [-latex] (\xinitial,\yinitial)node[below left]{0} -- (\xfinal +2,\yinitial)node[right]{$\Z_{\geq 0}$};
\foreach \i in {\yinitial,...,\yfinal}
{
\draw [dashed] (\xinitial,\i+1)node[left]{$\i$} -- (\xfinal+1,\i+1)node[right] {};
}

\foreach \i in {0,1}
{\foreach \j in {0,...,2}
{\fill [color=red](\i+1,\j+1) {circle(.2cm)};}}

\foreach \i in {2}
{\foreach \j in {0,1}
{\fill [color=red](\i+1,\j+1) {circle(.2cm)};}}

\foreach \i in {3,4}
{\foreach \j in {0}
{\fill [color=red](\i+1,\j+1) {circle(.2cm)};}}

\end{tikzpicture}
\vskip 0cm
(b) Evaluation monomials
\end{center}
\end{minipage}
\caption{Take $q=3$ and $r=2$. (a) Shows the points of the norm-trace curve $\cX: x^4=y^3+y$. Let $\textcolor{red}{\mathcal{M}}$ be the set of monomials whose exponents are the points in (b). The evaluation code ${\rm ev}(\textcolor{red}{\mathcal{M}})$ is an $[ 27, 10, 15 ]$ decreasing norm-trace code over $\F_9$.}
\label{22.03.13}
\vskip 0.cm
\end{figure}

Denote by $\Delta\left(x^{(q-1)u+1}, y^{q^{r-1}}\right)$ the set of monomials that are not multiples of either of these two monomials.
%
%\begin{lemma}\label{22.03.17}
%Let ${\rm ev}(\mathcal{M})$ be a decreasing norm-trace code.
%There exists a monomial set $\mathcal{M}^\prime \subseteq 
%\Delta\left(x^{(q-1)u+1}, y^{q^{r-1}}\right)$
%such that ${\rm ev}(\mathcal{M}) ={\rm ev}(\mathcal{M^\prime})$.
%\end{lemma}
%\begin{proof}
%The kernel of ${\rm ev}$ is the ideal
%$I_{\cX_u} = ( Tr(y) - x^u, x^{q^r} - x, y^{q^r} - y )$ by Lemma~\ref{nts}.
%Even more, the set $\{ Tr(y) - x^u, x^{(q - 1)u + 1} - x\}$
%is a Gr\"obner basis for $I_{\cX_u}$ with respect to the graded lexicographic 
%order with $x \prec y$ by Proposition~\ref{22.06.03}.
%
%Note that the evaluation map induces an isomorphism of $\fqr$-linear spaces 
%between
%$\fqr[x,y]/I_{\cX_u}$ and $\fqr^n$. So, the image of a monomial $x^ay^b \in 
%\mathcal{M}$ under the function ${\rm ev}$
%equals the image of its reminder $x^iy^j$ modulo $I_{\cX_u}$, which satisfies 
%$i < (q - 1)u + 1$ and $j < q^{r - 1}$.
%\end{proof}
%
From now on, we assume that $\mathcal{M} \subseteq \Delta\left(x^{(q-1)u+1}, 
y^{q^{r-1}}\right)$. % by Lemma~\ref{22.03.17}.

We come to one of the main results of this work, which computes the basic parameters of a decreasing norm-trace code.
\begin{theorem}\label{22.06.13}
The decreasing norm-trace code ${\rm ev}(\mathcal{M})$ has the following basic parameters.
\begin{itemize}
\item[\rm (1)] Length $n = \mid \cX_u \mid = ((q - 1)u + 1)q^{r-1}$.
\item[\rm (2)] Dimension $k = \mid \mathcal{M} \mid$.
\item[\rm (3)]Minimum distance 
\begin{align*}
d = &((q - 1)u + 1)q^{r-1} \\ &- \max\left( \{ \min\left(a q^{r - 1} + 
(u(q-1) + 1 - a) b , \,  a q^{r - 1} + b u\right) \mid x^a y^b \in \cM  
\}\right).
\end{align*}
\end{itemize}
\end{theorem}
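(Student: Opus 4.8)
Throughout write $t=q^{r-1}$ and $s=(q-1)u+1$, so $n=st$ and, by Proposition~\ref{22.06.03}, the footprint $\Delta_\prec(I_{\cX_u})$ is the rectangle $\Delta\left(x^{s},y^{t}\right)=\{x^iy^j:0\le i\le s-1,\ 0\le j\le t-1\}$. For a monomial $x^ay^b$ set $t_1(a,b)=at+(s-a)b$, $t_2(a,b)=at+bu$, and $Z(a,b)=\min(t_1,t_2)=at+b\min(s-a,u)$; the claim is $d=n-\max_{x^ay^b\in\mathcal{M}}Z(a,b)$. Part (1) is immediate from Proposition~\ref{22.06.03}. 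For part (2), the map $\varphi$ of Section~\ref{standard} is injective on the $\fqr$-span of the footprint monomials; since $\mathcal{M}\subseteq\Delta\left(x^{s},y^{t}\right)$, the vectors ${\rm ev}(M)$ for $M\in\mathcal{M}$ are linearly independent, so $k=|\mathcal{M}|$.

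For part (3) I would pass to zeros: the weight of ${\rm ev}(f)$ equals $n-\#\{P\in\cX_u:f(P)=0\}$, so $d$ is $n$ minus the largest number of zeros of a nonzero $f\in\cL$. I would extract the two terms of the $\min$ from two bounds. Writing $\lm(f)=x^ay^b$ for the lexicographic leading monomial (so $x^ay^b\in\mathcal{M}$), the footprint bound gives $\#\{P:f(P)=0\}\le|\Delta_\prec(I_{\cX_u}+(f))|\le st-(s-a)(t-b)=t_1(a,b)$. For the second term I would use that $\cX_u$ is a curve with a single point at infinity $P_\infty$ at which $-v_{P_\infty}(x)=t$ and $-v_{P_\infty}(y)=u$, with $\gcd(t,u)=1$; since a nonzero function regular off $P_\infty$ has at most as many affine zeros as its pole order, a function whose pole order is governed by $x^ay^b$ has at most $t_2(a,b)$ zeros.

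For achievability I would realize $Z(a,b)$ for the monomial $x^ay^b\in\mathcal{M}$ attaining the outer maximum. Fix $\gamma\in\fq^*$, choose distinct $\beta_1,\dots,\beta_b$ in the coset $\{\beta:Tr(\beta)=\gamma\}$ and distinct roots $\alpha_1,\dots,\alpha_a$ of $x^{s}-x$ so that, after removing the $a$ fibers $x=\alpha_i$, exactly $\min(s-a,u)$ of the $u$ fibers in the group $\{\alpha:\alpha^u=\gamma\}$ survive, and put $f=\prod_{i=1}^{a}(x-\alpha_i)\prod_{j=1}^{b}(y-\beta_j)$. Then $\lm(f)=x^ay^b$, every monomial of $f$ divides $x^ay^b$ and hence lies in $\mathcal{M}$ so $f\in\cL$, and Lemma~\ref{points} gives exactly $at+b\min(s-a,u)=Z(a,b)$ zeros. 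Maximizing $Z$ over $\mathcal{M}$ yields a codeword with $\max_{\mathcal{M}}Z$ zeros, matching the required lower bound on $d$.

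The main obstacle is the upper bound. The two bounds above are pinned to different monomials of $f$ — term one to the lexicographic leading monomial, term two to the weighted (pole-order) leading monomial — so combining them naively gives only $\#\{P:f(P)=0\}\le\min\left(\max_{\mathcal{M}}t_1,\ \max_{\mathcal{M}}t_2\right)$, and one checks this can be \emph{strictly} larger than $\max_{\mathcal{M}}\min(t_1,t_2)$. To reach the sharp bound I would argue group by group: by Lemma~\ref{points}, $\cX_u=\bigcup_\gamma A_\gamma$, where for $\gamma\neq0$ the set $A_\gamma$ is the product grid $\{\alpha:\alpha^u=\gamma\}\times\{\beta:Tr(\beta)=\gamma\}$ of size $u\times t$. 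On each such grid the relation $x^u=\gamma$ lowers the effective $x$-degree below $u$, so a Cartesian/footprint count bounds the zeros of $f$ on $A_\gamma$; summing these estimates over $\gamma$ and optimizing over the monomials actually occurring in $f$ is what produces the pointwise $\min$ and hence $\#\{P:f(P)=0\}\le\max_{x^ay^b\in\mathcal{M}}\min(t_1,t_2)$. Carrying out this bookkeeping — in particular controlling how many fibers $f$ can vanish on identically (a bound governed by the degree of the leading $y$-coefficient of $f$) — is the technical heart of the proof.
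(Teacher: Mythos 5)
Parts (1) and (2), and your achievability construction in (3), match the paper's proof. The paper likewise realizes $\min(t_1,t_2)$ zeros with a product $\prod_{i}(x-\alpha_i)\prod_{j}(y-\beta_j)$ whose $x$-roots are distributed across the sets $A_\gamma$ of Lemma~\ref{points}; it splits into the cases $t_2\le t_1$, $t_1<t_2$, and $b=0$ rather than phrasing the choice as ``exactly $\min(s-a,u)$ fibers of the group survive,'' but the construction is the same, and closure under divisibility is what places $f$ in the span of $\cM$ in both arguments.

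The genuine gap is the upper bound in (3). You correctly diagnose that a footprint bound pinned to the lexicographic leading monomial and a pole-order bound pinned to the weighted leading monomial cannot be naively combined, but your proposed repair --- summing footprint estimates over the grids $A_\gamma$ and controlling how many fibers $f$ vanishes on identically --- is left entirely as a sketch, and you yourself flag it as the unexecuted ``technical heart''; note also that $A_0$ is not a $u\times q^{r-1}$ grid, so that bookkeeping would need a separate case. The resolution is much simpler: run part (3) with a single \emph{weighted} order, giving $x$ weight $q^{r-1}$ and $y$ weight $u$ (ties broken toward larger $y$-exponent). For this order the leading monomials of $Tr(y)-x^u$ and $x^{(q-1)u+1}-x$ are still $y^{q^{r-1}}$ and $x^{(q-1)u+1}$, which are coprime, so the Gr\"obner basis and the rectangular footprint are unchanged; and since $\gcd(q^{r-1},u)=1$, distinct monomials of the rectangle have distinct weights, so $f$ has one well-defined weighted-leading monomial $x^ay^b\in\cM$ that simultaneously yields the footprint count $t_1(a,b)$ and the pole order $t_2(a,b)$ of $f$ at $P_\infty$. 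Thus $\bigm|V(I_{\cX_u}+(f))\bigm|\le\min\left(t_1(a,b),t_2(a,b)\right)$ for that single monomial, which is exactly the pointwise minimum you need; the weaker bound $\min\left(\max t_1,\max t_2\right)$ never enters. This is in substance the paper's argument: it applies the footprint bound and invokes \cite[Proposition 4]{geil2} for the bound $aq^{r-1}+bu$, both attached to the same leading monomial of $f$. As written, your upper bound is incomplete.
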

\begin{proof}
Statement (1) is a consequence of Proposition~\ref{22.06.03}. Statement (2) 
follows from the fact that $\{ M + I_{\cX_u} \mid M \in 
\Delta_{\prec}\left(I_{\cX_u}\right) \}$ is a 
basis for $\fqr[x,y]/I_{\X_u}$ as an $\fqr$-vector space together with the fact 
that the linear transformation $\varphi$, defined at the beginning of Section 
3, is an isomorphism. To prove Statement 
(3), let $f$ be a nonzero polynomial in the $\fqr$-vector space generated by 
the monomials in $\cM$. The 
set of points in $\cX_u$ which are zeros of $f$ is the set of the zeros of the 
ideal $I_{\cX_u} 
+ (f) \subseteq \fqr[x,y]$, denoted by $V(I_{\cX_u} 
+ (f))$.  The ideal  $I_{\cX_u} + (f)$ is a radical ideal; see \cite[Prop. 
8.14]{becker}. Therefore,  \cite[Thm. 8.32]{becker} implies that
$\mid V(I_{\cX_u}  + (f))\mid = \Delta_{\prec}(I_{\cX_u} + (f))$. 
Let $x^a y^b$ be the leading monomial of $f$ and 
let $\Delta\left(x^{(q-1)u+1}, y^{q^{r-1}}, x^a y^b \right)$ be the set of 
monomials that are not multiples of either of these three monomials. Then 
$\Delta_{\prec}(I_{\cX_u} + (f)) \subseteq \Delta(x^{(q-1)u+1}, y^{q^{r-1}}, x^a 
y^b)$ so 
that 
\begin{align*}
\mid V(I_{\cX_u}  + (f))\mid &\leq \mid\Delta(x^{(q-1)u+1}, y^{q^{r-1}}, x^a y^b)\mid \\ 
&= ((q-1)u + 1)q^{r - 1} - ((q-1)u + 1 - a)(q^{r-1} - b) \\
&= a q^{r-1} + ((q - 1)u + 1 - a)b.
\end{align*}
On the other hand, from \cite[Proposition 4]{geil2}), we have that 
$\mid V(I_{\cX_u}  + (f))\mid \leq a q^{r - 1}  +  b u$.

Assume that $a q^{r - 1}  +  b u 
\leq a q^{r-1} + ((q - 1)u + 1 - a)b$ and $b \neq 0$, so we have $a \leq (q - 
2) u + 1$. According to Lemma 
\ref{points} (and its proof), for all $\gamma \in fq^*$, the number 
of distinct elements of $\fqr$, which appear as the first entry of points in $A_\gamma$, is $u$, while $0$ is the first entry in all points of $A_0$. Fix $\gamma \in \fq^*$. Since 
$a \leq (q - 2) u + 1$, we may choose $\alpha_1, \ldots, \alpha_a \in \fqr$ such 
that for all $i = 1, \ldots, a$ we have $(\alpha_i, \beta_i) \in \cX_u$ for 
some $\beta_i \in \fqr$ and $\alpha_i^u \neq \gamma$. Recall that $b < q^{r 
- 1}$, and let $\beta_1, \ldots, \beta_b \in \fqr$ be distinct elements such 
that $Tr(\beta_j) = \gamma$ for all $j = 1, \ldots, b$.
Let $g(x,y) = \prod_{i = 1}^a (x - \alpha_i) \cdot \prod_{j = 1}^b (y - 
\beta_j)$. For every $i = 1, \ldots, a$, there exist $q^{r - 1}$ points in 
$\cX_u$ of the form $(\alpha_i, \beta)$, none of them in $A_\gamma$. For 
every $j = 1, \ldots, b$, there exist $u$ points of $\cX_u$ of the form 
$(\alpha, \beta_j)$, all of them in $A_\gamma$. Hence, $\mid V(I_{\cX_u}  + (g))\mid = a 
q^{r 
- 1}  +  b u$.

Now assume that $a q^{r-1} + ((q - 1)u + 1 - a)b < a q^{r - 1}  +  b u$ and $b 
\neq 0$. Then $a > (q - 2) u + 1$. Again, we fix $\gamma \in \fq^*$ and 
take 
$\beta_1, \ldots, \beta_b \in \fqr$ to be distinct elements such 
that $Tr(\beta_j) = \gamma$ for all $j = 1, \ldots, b$. Let $\alpha_1, \ldots, 
\alpha_a \in \fqr$ be distinct elements such that 
for all $i = 1, \ldots, a$ we have $(\alpha_i, \beta_i) \in \cX_u$ for 
some $\beta_i \in \fqr$ and 
for exactly $a - (q - 2) u - 1$ elements 
$\alpha_i$ we have $\alpha_i^u = \gamma$ (note that since $a < (q-1)u + 1$ we 
get $a - (q - 2) u - 1 < u$). Let $h(x,y) = \prod_{i = 1}^a (x - \alpha_i) 
\cdot \prod_{j = 1}^b (y - \beta_j)$, for each 
$\alpha \in \{\alpha_1, \ldots, \alpha_a\}$ we have $q^{r - 1}$ elements 
$(\alpha, \beta) \in \cX_u$, which are also zeros of $h$. For each $\beta \in 
\{\beta_1, \ldots, \beta_b\}$ we have $u - (a - (q - 2) u - 1) = (q - 1)u + 1 - 
a$ elements of the form  $(\alpha, \beta) \in \cX_u$ which are zeros of $h$ and 
have not been counted yet. Thus, the total number of zeros of $h$ in $\cX_u$ is
$\mid V(I_{\cX_u}  + (h))\mid = a q^{r - 1} + b ((q - 1)u + 1 - 
a)$.

If $b = 0$, then $a q^{r-1} + ((q - 1)u + 1 - a)b = a q^{r - 1}  +  b u$. Taking $\alpha_1, \ldots, \alpha_a \in \fqr$ such 
that for all $i = 1, \ldots, a$, we have $(\alpha_i, \beta_i) \in \cX_u$ for 
some $\beta_i \in \fqr$, the polynomial $t(x,y) = \prod_{i = 1}^a (x 
- \alpha_i)$ is such that $\mid V(I_{\cX_u}  + (t))\mid = a q^{r - 1}$.

Thus, we have proved that for every monomial $x^a y^b \in \cM$, there exists a polynomial $f$ in the $\fqr$-vector space generated by the monomials in $\cM$ having $x^a y^b$ as its leading monomial, and such 
that $\mid V(I_{\cX_u}  + (f))\mid$ attains its greatest possible value, namely
$\min\left\{a q^{r - 1} + (u(q-1) + 1 - a) b , \,  a q^{r - 1} + b u\right\}$. This completes the proof.
\end{proof}

\begin{example}\rm\label{23.06.32}
Take $q=2$, $r=4$, and $u=3$. Figure~\ref{22.09.17}~(a) shows the points of the extended norm-trace curve $\cX_u$. Let $\textcolor{red}{\mathcal{M}}$ be the set of monomials in $\fqr[x,y]$ whose exponents are the points in Figure~\ref{22.09.17}~(b). Note that $\textcolor{red}{\mathcal{M}}$ is closed under divisibility. By Theorem~\ref{22.06.13}, ${\rm ev}(\textcolor{red}{\mathcal{M}})$ is a $[ 32, 12, 12 ]$ decreasing norm-trace code over $\F_{16}$.
\end{example}
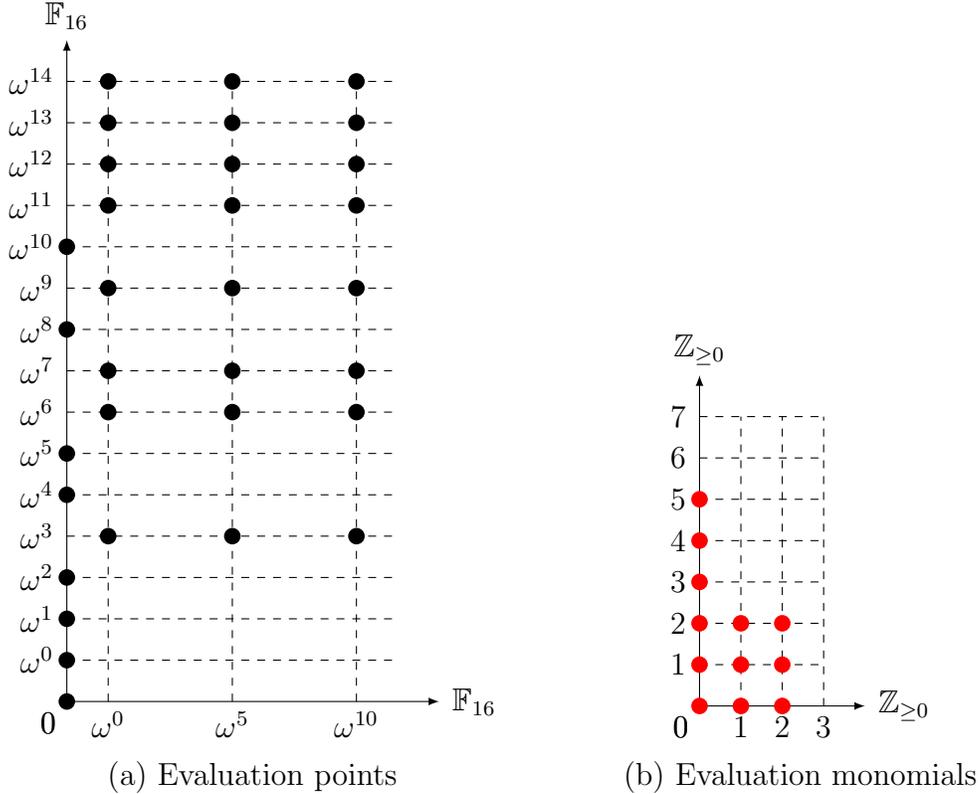
\begin{figure}[h]
\vskip 0cm
\noindent
\begin{minipage}[t]{0.45\textwidth}
\begin{center}
\begin{tikzpicture}[scale=0.55]
%These are the powers that we want on the x- and y-axis.
\def\xinitial{0}
\def\xfinal{7}
\def\yinitial{0}
\def\yfinal{14}

%The following defines the vertical lines
\draw [-latex] (\xinitial,\yinitial)node[below left]{0} -- (\xinitial,\yfinal+2)node[above]{$\F_{16}$};
%\foreach \i in {\xinitial,...,\xfinal}
%{
%\draw [dashed] (\i+1,\yinitial)node[below]{$\omega^{\i}$} -- (\i+1,\yfinal+1)node[right] {};
%}
\draw [dashed] (0+1,\yinitial)node[below]{$\omega^{0}$} -- (0+1,\yfinal+1)node[right] {};
\draw [dashed] (5-1,\yinitial)node[below]{$\omega^{5}$} -- (5-1,\yfinal+1)node[right] {};
\draw [dashed] (10-3,\yinitial)node[below]{$\omega^{10}$} -- (10-3,\yfinal+1)node[right] {};

%The following defines the horizontal lines
\draw [-latex] (\xinitial,\yinitial)node[below left]{0} -- (\xfinal +2,\yinitial)node[right]{$\F_{16}$};
\foreach \i in {\yinitial,...,\yfinal}
{
\draw [dashed] (\xinitial,\i+1)node[left]{$\omega^{\i}$} -- (\xfinal+1,\i+1)node[right] {};
}

\foreach \j in {-1,0,1,2,4,5,8,10}
{
\fill [color=black](0,\j+1) {circle(.2cm)};
}

\foreach \i in {0}
{\foreach \j in {3,6,7,9,11,12,13,14}
{\fill [color=black](\i+1,\j+1) {circle(.2cm)};}}

\foreach \i in {5}
{\foreach \j in {3,6,7,9,11,12,13,14}
{\fill [color=black](\i-1,\j+1) {circle(.2cm)};}}

\foreach \i in {10}
{\foreach \j in {3,6,7,9,11,12,13,14}
{\fill [color=black](\i-3,\j+1) {circle(.2cm)};}}

\end{tikzpicture}
\vskip 0cm
(a) Evaluation points
\end{center}
\end{minipage}
\begin{minipage}[t]{0.45\textwidth}
\begin{center}
\begin{tikzpicture}[scale=0.55]

\def\xinitial{1}
\def\xfinal{3}
\def\yinitial{1}
\def\yfinal{7}

%The following defines the vertical lines
\draw [-latex] (\xinitial,\yinitial)node[below left]{0} -- (\xinitial,\yfinal+2)node[above]{$\Z_{\geq 0}$};
\foreach \i in {\xinitial,...,\xfinal}
{
\draw [dashed] (\i+1,\yinitial)node[below]{$\i$} -- (\i+1,\yfinal+1)node[right] {};
}

%The following defines the horizontal lines
\draw [-latex] (\xinitial,\yinitial)node[below left]{0} -- (\xfinal +2,\yinitial)node[right]{$\Z_{\geq 0}$};
\foreach \i in {\yinitial,...,\yfinal}
{
\draw [dashed] (\xinitial,\i+1)node[left]{$\i$} -- (\xfinal+1,\i+1)node[right] {};
}

\foreach \i in {0}
{\foreach \j in {0,...,5}
{\fill [color=red](\i+1,\j+1) {circle(.2cm)};}}

\foreach \i in {1,...,2}
{\foreach \j in {0,...,2}
{\fill [color=red](\i+1,\j+1) {circle(.2cm)};}}

\end{tikzpicture}
\vskip 0cm
(b) Evaluation monomials
\end{center}
\end{minipage}
\caption{Take $q=2$, $r=4$, and $u=3$. (a) Shows the points of the norm-trace curve $\cX_u: x^3=y^8+y^4+y^2+y$. Let $\textcolor{red}{\mathcal{M}}$ be the set of monomials whose exponents are the points in (b). The evaluation code ${\rm ev}(\textcolor{red}{\mathcal{M}})$ is an $[ 32, 12, 12 ]$ decreasing norm-trace code over $\F_{16}$.}
\label{22.09.17}
\vskip 0.cm
\end{figure}

\subsection{Relationship with one-point algebraic geometric codes}\label{23.06.30}
The family of decreasing norm-trace codes contains, as a particular case, the 
family of one-point geometric codes over the norm-trace. Indeed, define 
$\cL_s:=\left\{x^iy^j \in \Delta(I_{\X_u}) \mid iq^{r-1} 
+j\frac{q^r-1}{q-1} \leq s \right\}$. It is 
straightforward to check that $\cL_s$ is closed under divisibility, and that the 
one-point geometric Goppa codes over the norm-trace are 
obtained, as detailed below, through the evaluation, at the points of $\X_u$, 
of the polynomials in the space generated by $\cL_s$
 (see, e.g. \cite{geil,HLP}). We note that the new constructions of 
decreasing norm-trace codes is more general. Example~\ref{23.06.32} is not a 
one-point code. The nearest one-point codes are $C(D,G)$ with $G=\{18, 19, 
20\}P_{\infty}$, which are $[32,\{12,13,14\}, \geq \{14,13,12\}]$ codes.

The extended norm-trace codes introduced and studied in \cite{bras-amor} and \cite{Heera-Pin} are also particular instances of decreasing norm-trace codes.

\begin{remark}
Note that Theorem~\ref{22.06.13} allows us to recover the exact minimum distances of one-point codes on $\mathcal{X}_u$ by choosing specific sets $\mathcal{M} \subseteq \Delta\left(x^{(q-1)u+1}, y^{q^{r-1}}\right)$. For the norm-trace curve, this approach had already 
appeared in \cite{geil}, where such codes are denoted by $E(s)$. Moreover, the improved codes $\tilde{E}(s)$, which appear in \cite{geil}, are also decreasing norm-trace codes, and our results recover those of \cite{geil} with respect to their parameters.
\end{remark}

One may notice that $k+d=$
$$
n+1-\left( max\left( \{ \min\left(a q^{r - 1} + (u(q-1) + 1 - a) b , \,  a q^{r - 1} + b u\right)
\mid x^a y^b \in \cM \}\right) - \mid \mathcal{M} \mid +1\right),$$
meaning decreasing norm-trace codes have a gap of 
$$
 \max\left( \{ \min\left(a q^{r - 1} + 
(u(q-1) + 1 - a) b , \,  a q^{r - 1} + b u\right) \mid x^a y^b \in \cM  
\}\right) - \mid \mathcal{M} \mid +1
$$
to the Singleton Bound $k+d \leq n+1$.

Let $\X$ be the nonsingular curve defined over $\mathbb{F}_{q^r}$ which has 
\[x^{\frac{q^r - 1}{q - 1}} = y^{q^{r - 1}} + y^{q^{r - 2}} + \cdots + y\]
as an affine plane model. Then $\X$ 
has only one point at infinity, say $P_\infty$, which is a rational point, and $\X$ 
has exactly $q^{2r - 1}$ other rational  
points.  The pole divisor of $x$ is 
$q^{r - 1} P_\infty$ and the pole divisor of $y$ is $ \frac{q^r - 1}{q - 1}
P_\infty$. Let $D$ be the divisor which is the sum of the $q^{2r - 1}$ affine 
rational points of $\X$. All AG-codes of the form $C_{\mathcal{L}}(D, s 
P_\infty)$ coincide with codes  $\textrm{ev}(L_s)$, where $L_s$ is the 
$\mathbb{F}_{q^r}$-vector space generated by 
\[ \cL_s = 
\left\{ x^a y^b \, \mid \, 
0 \leq a \leq q^r - 1, 0 \leq b \leq q^{r - 1} - 1,  a q^{r - 1}  + b \frac{q^r 
- 1}{q - 1}  \leq s \right\}
\]
(see \cite[Remark 1]{geil} and the references therein). The dimension of $L_s$ 
is $| A_s |$, and if, for a given $s$, 
there exists a monomial $x^a y^b \in A_s$ such that
$$\min\left(a q^{r - 1} + (q^r - a) b , \,  a q^{r - 1} + b \frac{q^r - 1}{q - 1}\right) = s,$$
then the minimum distance of $L_s$ is equal to $q^{2r - 
1} - \sigma(s),$ where
\[
\sigma(s) = \max \left\{ \min \left(a q^{r - 1} + (q^r - a) b , \,  a q^{r - 1} + b \frac{q^r - 1}{q - 1}\right)
\, \mid \, x^a y^b \in A_s \right\}
\]
(see \cite[Thm.\ 1 and Thm.\ 2]{geil}). We present now some numerical examples 
comparing codes constructed in this paper with the algebraic geometry codes 
described above.

\begin{example}
We start by taking $q = 3$ and $r = 2$.

Choosing $s = 23$, we 
get that 
\[
A_{23}  = \{ x^a y^b \, \mid \, 
0 \leq a \leq 8, 0 \leq b \leq 2,  3  a + 4 b  \leq 23 \}.
\]
Thus $x^5 y^2 \in A_{23}$ and 
$ \min(5 \cdot 3 + 
(9 - 5) \cdot 2 , \,  5 \cdot 3 + 2 \cdot 4) = 23,
$
so the minimum distance is equal to $27 - \sigma(23) = 27 - 23 = 4$.
It is easy to check that
\[A_{23} = \{ x^a y^b \, \mid \, 
0 \leq a \leq 5, 0 \leq b \leq 2\} \cup  \{ x^6, x^6 y, x^7\},\] so
 $\dim (L_{23}) = 21$.
Now we take $\mathcal{M}_{23} := A_{23} \cup \{ x^7 y\}$. Then 
$\mathcal{M}_{23}$ is closed under divisibility, and from the formulas of 
Theorem~\ref{22.06.13}, we get that $\textrm{ev}(\mathcal{M}_{23})$ has dimension 22 and 
minimum distance equal to 4. Both codes have the same length, so  
$\textrm{ev}(\mathcal{M}_{23})$ is better. 

We also obtain a better code choosing $s = 21;$ in this case
$$A_{21} = \{ x^a y^b \, \mid \, 
0 \leq a \leq 4, 0 \leq b \leq 2\} \cup  \{ x^5, x^5 y, x^6, x^7\},$$ so
$\dim(L_{21}) = 19$ and the minimum distance is equal to $27 - 21 = 6$ (here we 
use that $x^7 \in A_{21}$). Taking 
$\mathcal{M}_{21} := A_{21} \cup \{ x^6 y\}$, we get from Theorem~\ref{22.06.13} that 
$\dim(\textrm{ev}(\mathcal{M}_{21})) = 20$ and has minimum distance equal to 6.

Both $\mathcal{M}_{23}$ and $\mathcal{M}_{21}$ have the best minimum distance 
among known codes defined over $\mathbb{F}_{3^2}$ with length $27$ and, 
respectively, dimensions 22 and 20, according to the tables in 
\cite{code_tables}.
\end{example}

By taking higher values for the field over which the codes are defined we may 
obtain more striking differences. 

\begin{example} Take $q = 3$ and $r = 4$, 
so that $q^r = 81$, $q^{r - 1} = 27$ and $(q^r - 1)/(q - 1) = 40$. Set $s = 
57\cdot 27 = 1539$. Unlike the above example, which was computed by hand, in this 
one we used Magma (\cite{magma}) to find out that $| A_{1539} | = 1033$. Thus, 
$\dim(L_{1539}) = 1033$ and using the formulas above (and the fact that $x^{57} 
\in A_{1539}$), we get that the minimum 
distance of this code is $q^{2r - 1} - 1539 = 2187 - 1539 = 648$. Now we take 
\begin{equation*}
\begin{split}
\mathcal{M}_{1539} := A_{1539} \cup \{x^{44} y^9, & x^{45} y^9, x^{46} y^8, 
x^{47} y^7, x^{48} y^7, x^{49} y^6, x^{50} y^5, \\ &x^{50} y^6, x^{51} y^5,
x^{52} y^4, x^{53} y^3, x^{54} y^3, x^{55} y^2, x^{56} y \}.
\end{split}
\end{equation*}
One may check that $\mathcal{M}_{1539}$ is closed under divisibility, and  
using the formulas of Theorem~\ref{22.06.13}, we get that 
$\textrm{ev}(\mathcal{M}_{1539})$ has dimension equal to $1047$ and minimum 
distance equal to $648$.
\end{example}

We also get good codes when $u$ is a proper divisor of $(q^r - 1)/(q - 1)$ as the next example demonstrates. 

\begin{example}
Take $q = 3$ and $r = 2$, but this time we choose $u = 2$, a proper 
divisor of $(q^r - 1)/(q - 1) = 4$. We construct codes of length $((q - 1)u + 1)q^{r - 1} = 15$ using the monomials from the first column of Table~\ref{table:1}.  The parameters of these codes are given in the second and third column of Table~\ref{table:1}. According to the tables in \cite{code_tables}, these are the best minimum 
distances for the corresponding dimensions, considering codes of length 15 defined over $\mathbb{F}_{3^2}$.
\end{example}

\begin{table} 
\centering
{\small \renewcommand{\arraystretch}{1.2}
\noindent
\begin{tabular}{|l|c|c|}
\hline
\hspace{20ex}$\mathcal{M}$ & $\dim(\textrm{ev}(\mathcal{M}))$ & 
min.\ dist.\ of 
$\textrm{ev}(\mathcal{M})$  \\  \hline
$\{1, y \}$ & 2 & 13  \\ \hline
$\{1, y, x \}$  & 3 & 12  \\ \hline
$\{1, y, y^2, x \}$ & 4 & 11  \\ \hline
$\{1, y, y^2, x , x y\}$ & 5 & 10  \\ \hline
$\{1, y, y^2, x , x y, x^2\}$ & 6 & 9  \\ \hline
$\{1, y, y^2, x , x y, x y^2, x^2 \}$ & 7 & 8  \\ \hline
$\{1, y, y^2, x , x y, x y^2, x^2, x^2 y\}$ & 8 & 7  \\ \hline
$\{1, y, y^2, x , x y, x y^2, x^2, x^2 y, x^3\}$ & 9 & 6  \\ \hline
$\{1, y, y^2, x , x y, x y^2, x^2, x^2 y, x^2 y^2, x^3 \}$ & 10 & 5  \\ \hline
$\{1, y, y^2, x , x y, x y^2, x^2, x^2 y, x^2 y^2, x^3,x^3 y \}$ & 11 & 4  \\ 
\hline
$\{1, y, y^2, x , x y, x y^2, x^2, x^2 y, x^2 y^2, x^3,x^3 y, x^4 \}$ & 12 & 3  
\\ \hline
\end{tabular}
\caption{Code parameters.}
\label{table:1}
}
\vspace{2ex}
\end{table}

\noindent

\section{Duals of decreasing norm-trace codes}\label{22.09.14}
This section proves that the dual of a decreasing norm-trace code is equivalent to a decreasing norm-trace code. In addition, we describe the dual code in terms of the monomial set and the coefficients of the indicator functions. We then give conditions to find families of self-dual and self-orthogonal codes.
 
Recall that the two linear codes $C_1$ and $C_2$ in $\F_{q^r}^n$
are \textit{equivalent} if there is ${\bm 
\beta}=(\beta_1,\ldots,\beta_n) \in \F_{q^r}^n$ such that $\beta_i\neq 0$ for all $i$ and 
$C_2 = {\bm \beta} \cdot C_1 := \{\beta\cdot c\mid c\in C_1\}$, 
where ${\bm \beta} \cdot c: =(\beta_1c_1,\ldots,\beta_nc_n)$ for 
$c=(c_1,\ldots,c_n)\in C_1$. Some authors call such codes monomially equivalent, but we will simply say equivalent as no other type of equivalence is considered in this paper.

We come to one of the main results of this work, which computes the dual of a decreasing norm-trace code. Recall that the two linear codes $C_1$ and $C_2$ in $\F_{q^r}^n$
are \textit{equivalent} if there is ${\bm 
\beta}=(\beta_1,\ldots,\beta_n) \in \F_{q^r}^n$ such that $\beta_i\neq 0$ for all $i$ and 
$C_2 = {\bm \beta} \cdot C_1 := \{\beta\cdot c\mid c\in C_1\}$, 
where ${\bm \beta} \cdot c: =(\beta_1c_1,\ldots,\beta_nc_n)$ for 
$c=(c_1,\ldots,c_n)\in C_1$. 

\begin{theorem}\label{22.03.15}
Assume $\cX_u = \left\{P_1,\ldots,P_n \right\} $ and let ${\rm ev}(\mathcal{M})$ be a decreasing norm-trace code.
The dual code ${\rm ev}\left(\mathcal{M}\right)^\perp$ is equivalent to the code ${\rm ev}\left(\mathcal{M}^\complement\right)$ where
\[\mathcal{M}^\complement :=
\left\{\frac{x^{(q-1)u} y^{q^{r-1}-1}}{x^iy^j} : x^iy^j \in 
\Delta\left(x^{(q-1)u+1}, y^{q^{r-1}}\right) \setminus \mathcal{M} \right\} \] 
denotes the complement of $\mathcal{M}$. 
More precisely,
\[{\rm ev}\left(\mathcal{M}\right)^\perp={\bm \beta}\cdot {\rm ev}\left(\mathcal{M}^\complement\right),\]
where $\beta_i:= \begin{cases} u^{-1} & \textnormal{if the }x\textnormal{-coordinate of } P_i \textnormal{ is nonzero} \\ 1 & \textnormal{otherwise.} \end{cases}$
\end{theorem}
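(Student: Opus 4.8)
The plan is to prove the two codes coincide by combining a dimension count with orthogonality. Write $T := x^{(q-1)u}y^{q^{r-1}-1}$ and $\Delta := \Delta\!\left(x^{(q-1)u+1}, y^{q^{r-1}}\right)$. Since $x^iy^j \mapsto T/(x^iy^j) = x^{(q-1)u-i}y^{q^{r-1}-1-j}$ is an involution of $\Delta$, we have $\mathcal{M}^\complement \subseteq \Delta$ and $|\mathcal{M}^\complement| = |\Delta| - |\mathcal{M}| = n - |\mathcal{M}|$. As footprint monomials evaluate to linearly independent vectors (as in the proof of Theorem~\ref{22.06.13}(2)), $\dim {\rm ev}(\mathcal{M}^\complement) = n - |\mathcal{M}| = \dim {\rm ev}(\mathcal{M})^\perp$, and scaling by the nonvanishing $\bm\beta$ preserves dimension. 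So it suffices to prove the containment $\bm\beta\cdot{\rm ev}(\mathcal{M}^\complement)\subseteq{\rm ev}(\mathcal{M})^\perp$: for every $M=x^ay^b\in\mathcal{M}$ and every $x^iy^j\in\Delta\setminus\mathcal{M}$ with $N=T/(x^iy^j)$, I must show the scalar
\[
\sigma := \langle {\rm ev}(M),\, \bm\beta\cdot{\rm ev}(N)\rangle = \sum_{P=(\alpha,\beta)\in\cX_u}\beta_P\,(MN)(P)
\]
vanishes. Because $\mathcal{M}$ is closed under divisibility and $x^iy^j\notin\mathcal{M}$ while $x^ay^b\in\mathcal{M}$, the monomial $x^iy^j$ cannot divide $x^ay^b$, so $i>a$ or $j>b$; equivalently the bidegree $(s,t)$ of $MN=x^sy^t$ satisfies $s=a+(q-1)u-i\le(q-1)u-1$ or $t=b+q^{r-1}-1-j\le q^{r-1}-2$.

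The tool that ties the two cases together is the observation that $\bm\beta$ is, up to sign, the vector of leading coefficients of the indicator functions. By Theorem~\ref{22.06.05}, $f_P$ is a combination of monomials of $\Delta$ with leading monomial $T$ and leading coefficient equal to the constant $c$ (which is $-1$ if $\alpha=0$ and $(-u)^{-1}$ otherwise); writing $c_P$ for this constant, a direct comparison gives $c_P=-\beta_P$ in both cases. Since $MN\equiv\sum_{P\in\cX_u}(MN)(P)\,f_P\pmod{I_{\cX_u}}$ (both sides share the same evaluation vector and $\varphi$ is injective) and each $f_P$ is already written in the monomial basis of $\Delta$, the coefficient of $T$ in the normal form of $MN$ is $\sum_P (MN)(P)\,c_P=-\sigma$. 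Thus $\sigma$ equals, up to sign, the coefficient of $T$ in the reduction of $x^sy^t$ modulo the Gr\"obner basis $\{Tr(y)-x^u,\,x^{(q-1)u+1}-x\}$ of Proposition~\ref{22.06.03}, and I will show this coefficient is $0$ in each case, using whichever description of $\sigma$ collapses the case.

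If $j>b$, then $t\le q^{r-1}-2$, so $y^t$ is not divisible by the leading monomial $y^{q^{r-1}}$ of $Tr(y)-x^u$; the only applicable reductions are by $x^{(q-1)u+1}-x$, which touch only the $x$-exponent. Hence the normal form of $x^sy^t$ is a polynomial in $x$ times $y^t$, every term of $y$-degree $t<q^{r-1}-1$, so the coefficient of $T$ is $0$ and $\sigma=0$. If instead $i>a$, then $s\le(q-1)u-1$, and here I compute $\sigma$ directly as a weighted power sum. Splitting $\cX_u=\bigsqcup_{\gamma\in\fq}A_\gamma$ via Lemma~\ref{points} and using that the $u$-th roots of unity lie in $\fqr$ (as $u\mid q^r-1$), the inner sum $\sum_{\alpha^u=\gamma}\alpha^s$ equals $u\,\gamma^{s/u}$ when $u\mid s$ and $0$ otherwise, so the factor $u$ cancels the $u^{-1}$ in $\beta_P$. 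This yields $\sigma=\sum_{\beta\in\fqr}\beta^t$ when $s=0$, $\sigma=0$ when $u\nmid s$, and $\sigma=\sum_{\beta\in\fqr}Tr(\beta)^{s/u}\beta^t$ when $u\mid s$ and $s>0$. In the last situation $m:=s/u\le q-2$, so expanding $Tr(y)^m$ every exponent $e$ of $\beta$ satisfies $m\le e\le m\,q^{r-1}\le(q-2)q^{r-1}$, whence $0<e+t\le q^r-2<q^r-1$ and each power sum $\sum_{\beta}\beta^{e+t}$ vanishes; the cases $s=0$ and $u\nmid s$ are immediate since $0\le t\le 2(q^{r-1}-1)<q^r-1$. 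In every subcase $\sigma=0$.

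This establishes the containment, and together with the matching dimensions it gives the equality ${\rm ev}(\mathcal{M})^\perp=\bm\beta\cdot{\rm ev}(\mathcal{M}^\complement)$; since each $\beta_i\ne0$, this is exactly the asserted equivalence. The hard part is that neither computation handles both cases on its own: the normal-form argument settles $j>b$ at once but fails when $i>a$, because reducing a high power $y^t$ injects the term $x^u$ and can push the $x$-exponent up to $(q-1)u$, while the direct power-sum argument dispatches $i>a$ cleanly but for $j>b$ would demand a delicate Lucas-type vanishing of sums of multinomial coefficients modulo the characteristic. The crux is therefore to recognize, through the leading coefficients of the indicator functions in Theorem~\ref{22.06.05}, that the single scalar $\sigma$ admits both descriptions, and then to apply the one that trivializes each case.
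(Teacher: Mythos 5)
Your proof is correct, and while its skeleton matches the paper's (a cardinality/dimension match plus pairwise orthogonality, with the same dichotomy ``$i>a$ or $j>b$'' extracted from closure under divisibility), it diverges genuinely in two places. First, where the paper invokes \cite[Theorem 5.4]{dual2} as a black box to reduce everything to the coefficient of $x^{(q-1)u}y^{q^{r-1}-1}$ in the normal form of $x^{a+c}y^{b+d}$, you make the criterion self-contained: the dimension count $\dim{\rm ev}(\mathcal{M}^\complement)=n-|\mathcal{M}|$ via the involution of $\Delta$, plus the identity $\sigma=-[\text{coeff of }T\text{ in the normal form}]$ obtained from $MN\equiv\sum_P(MN)(P)f_P$ and the observation $c_P=-\beta_P$ (which does check out against Theorem~\ref{22.06.05} in both the $\alpha=0$ and $\alpha\neq0$ cases). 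Second, and more substantially, in the hard case the paper stays inside the Gr\"obner machinery, repeatedly rewriting $y^{2q^{r-1}-2-e}$ via $Tr(y)-x^u$ and tracking that no intermediate monomial lands on $T$; you instead compute $\sigma$ directly as a weighted power sum over the fibers $A_\gamma$ of Lemma~\ref{points}, using $\sum_{\alpha^u=\gamma}\alpha^s=u\gamma^{s/u}$ or $0$ and the vanishing of $\sum_{\beta\in\fqr}\beta^e$ for $0<e<q^r-1$ (your exponent bounds $1\le e+t\le q^r-2$ are correct). The paper's route is uniform --- one criterion, one kind of reduction --- but its bookkeeping in the $b+d>q^{r-1}-1$ case is delicate; yours avoids that bookkeeping entirely and is independent of the external duality theorem, at the price of running two different computations and needing the (correct) insight that both compute the same scalar $\sigma$.
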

\begin{proof}
From Theorem \ref{22.06.05}, we get that $y^{q^{r - 1}-1}x^{(q - 1)u}$ is the 
leading 
monomial of the standard indicator function for all points in $\cX_u$.
We also note that 
\[ \bigm| \mathcal{M}\bigm| + \bigm| \mathcal{M}^\complement \bigm| = \bigg| 
\Delta\left(x^{(q-1)u+1}, y^{q^{r-1}}\right) \bigg| = \mid \cX_u \mid.\]
Thus, according to \cite[Theorem 5.4]{dual2} (\cite[Theorem 5.4]{dual2} is an updated version of \cite[Theorem 5.4]{dual}), to prove the Theorem it suffices 
to prove that, given $x^{a} y^{b} \in \mathcal{M}$ and
$x^{c} y^{d}\in \mathcal{M}^\complement$,  the coefficient of the monomial 
$x^{(q-1)u} y^{q^{r - 1} - 1}$ in the unique 
$\fqr$-linear combination  of monomials in $\Delta_{\prec}(I_{\cX_u})$ 
which has the same class, in $\fqr[x, y]/I_{\X_u}$,  as $x^{a + c} y^{b + d}$, 
is equal to zero.

Before we prove that, we claim that if $a + c \geq (q - 1)u$, then $b + d < q^{r 
- 1} -1$. Indeed, 
assume that $b + d \geq q^{r 
- 1} -1$. We set $c' = (q-1)u - c$ and $d' = q^{r - 1} - 1 - d$, and from the 
definition of  $\mathcal{M}^\complement$, we get that $x^{c'} 
y^{d'} \notin \mathcal{M}$. From $d \geq q^{r - 1} - 1  - b$, we get $d' \leq b$.
Since $\mathcal{M}$ is closed under divisibility, we must have $c' > a$, so 
$(q-1)u - c > a$ and $a + c < (q-1)u$, which proves the claim.

This shows that either $a + c \geq (q - 1)u$ and $b + d < q^{r - 1} -1$, or 
$b + d \geq q^{r - 1} -1$ and $a + c < (q - 1)u$. 

Assume that $a + c > (q - 1)u$. Since $a + c \leq 2 (q - 1)u$, we write $a + c = 
((q - 1) u + 1) + r$ with $0 \leq r \leq (q - 1)u -1$. Then $x^{a + c} + 
I_{\cX_u} = x^{(q - 1) u + 1} x^r +  I_{\cX_u} = x^{r + 1} +  I_{\cX_u}$ and we 
get, in this case, that $x^{a + c} y^{b + d} + I_{\cX_u} = x^{r + 1} y^{b + d} 
+ I_{\cX_u}$ with $x^{r + 1} y^{b + d} \in \Delta_{\prec}(I_{\cX_u}) \setminus 
\{x^{(q-1)u} y^{q^{r - 1} - 1}\}$, since $b + d < q^{r - 1} -1$.

Now assume that $b + d > q^{r - 1} -1$ (and then $a + c < (q - 1)u$). 
Note 
that $b + d  \leq 2 q^{r - 1} - 2$, so we 
write $b + d = 2 q^{r - 1} - 2 - e$, with $0 \leq e \leq q^{r - 1} - 2$.
We have
\begin{equation*}
\begin{split}
	y^{2 q^{r - 1} - 2 - e}  = &y^{q^{r - 1} - 2 - e}(y^{q^{r - 1}} + 
		y^{q^{r - 2}} + \cdots + y^q + y - x^u) \\ &- y^{q^{r - 1} - 2 - 
		e}(y^{q^{r - 2}} + \cdots + y^q + y - x^u)
\end{split}
\end{equation*}
so that 
\[
x^{a + c} y^{b + d} + I_{\cX_u} = - x^{a + c}y^{q^{r - 1} - 2 - 
		e}(y^{q^{r - 2}} + \cdots + y^q + y - x^u)  + I_{\cX_u}.
\]
If $q^{r - 1} + q^{r - 2} - 2 - e \leq q^{r - 1} - 1$ then all the monomials in 
the 
polynomial $x^{a + c}y^{q^{r - 1} - 2 - e}(y^{q^{r - 2}} + \cdots + y^q + y - 
x^u)$ are in $\Delta_{\prec}(I_{\cX_u}) \setminus 
\{x^{(q-1)u} y^{q^{r - 1} - 1}\}$, except possibly for $x^{a + b + u}$, which 
may be reduced, as above, to a power of $x$ in $\Delta_{\prec}(I_{\cX_u})$. 

The last case to consider is when $q^{r - 1} + q^{r-2} - 2 - e \geq q^{r - 1}$. 
In this case, we have to find proper representatives for some monomials in
$x^{a + c}y^{q^{r - 1} - 2 - e}(y^{q^{r - 2}} + \cdots + y^q + y - x^u)$. We 
write $q^{r - 1} + q^{r-2} - 2 - e = 2 q^{r - 1} - 2 - e'$, where $e' = q^{r - 
1} - q^{r - 2} + e$, and we have $q^{r - 1}  - q^{r - 2} \leq e' \leq q^{r - 1} 
- 2$. From 
\begin{equation*}
\begin{split}
	y^{2 q^{r - 1} - 2 - e'}  = &y^{q^{r - 1} - 2 - e'}(y^{q^{r - 1}} + 
		y^{q^{r - 2}} + \cdots + y^q + y - x^u) \\ &- y^{q^{r - 1} - 2 - 
		e'}(y^{q^{r - 2}} + \cdots + y^q + y - x^u),
\end{split}
\end{equation*}
we get that 
$x^{a + c} y^{2 q^{r - 1} - 2 - e'} + I_{\cX_u} = - x^{a + c}y^{q^{r - 1} - 2 
- e'}(y^{q^{r - 2}} + \cdots + y^q + y - x^u)  + I_{\cX_u}$. Since 
$q^{r - 1} + q^{r - 2} - 2 - e' \leq 2 q^{r - 2} - 2 < q^{r - 1} - 1$, we get 
that the monomials in the polynomial $x^{a + c}y^{q^{r - 1} - 2 
- e'}(y^{q^{r - 2}} + \cdots + y^q + y - x^u)$ are in 
$\Delta_{\prec}(I_{\cX_u}) \setminus 
\{x^{(q-1)u} y^{q^{r - 1} - 1}\}$, except possibly for $x^{a + b + u}$ which 
may be represented by a 
power of $x$ in $\Delta_{\prec}(I_{\cX_u})$. This takes care of the monomial
$x^{a + c}y^{q^{r - 1} + q^{r - 2} - 2 - e}$ in  
$x^{a + c}y^{q^{r - 1} - 2 - e}(y^{q^{r - 2}} + \cdots + y^q + y - x^u)$. In 
the same way we prove that if, for some integer $s \geq 3$, we get $q^{r - 1} + 
q^{r-s} - 2 - e \geq q^{r - 1}$, then 
$y^{q^{r - 1} + q^{r - s} - 2 - e} + I_{\cX_u} = - y^{q^{r - 1} - 2 - 
e''}(y^{q^{r - 2}} + \cdots + y^q + y - x^u) + I_{\cX_u}$, where 
$q^{r - 1}  - q^{r - s} \leq e'' \leq q^{r - 1} - 2$, so 
$q^{r - 1} + q^{r - s} - 2 - e'' \leq  q^{r - 2} + q^{r - s} - 2 < q^{r - 1} - 
1$.	This proves that, also in the case where $q^{r - 1} + q^{r-2} - 2 - e \geq 
q^{r - 1}$,  the unique 
$\fqr$-linear combination  of monomials in $\Delta_{\prec}(I_{\cX_u})$, 
which has the same class as $x^{a + c}y^{q^{r - 1} - 2 - e}(y^{q^{r - 2}} + \cdots + y^q + y - x^u)$ in $\fqr[x, y]/I_{\X_u}$,
has zero as the coefficient of the monomial $x^{(q-1)u} y^{q^{r - 1} - 1}$, 
which completes the proof of the Theorem.
\end{proof}

\begin{example}\label{22.03.12} \rm
Take $q=3$ and $r=2$.Figure~\ref{22.03.13}~(a) shows the points of the norm-trace curve
$\cX: x^4=y^3+y$.  Let $\textcolor{red}{\mathcal{M}}$ be the set of 
monomials in $\Delta\left(x^{9}, y^{3}\right)$ of degree at most 4. The exponents of these monomials 
are the points in Figure~\ref{22.03.14}~(a). The complement of 
$\textcolor{red}{\mathcal{M}}$ on $\cX$ is the set of monomials 
$\textcolor{blue}{\mathcal{M}^\complement}$, whose exponents are the points in 
Figure~\ref{22.03.14}~(b). By Theorem~\ref{22.03.15}, the dual code
${\rm ev}\left(\textcolor{red}{\mathcal{M}}\right)^\perp$ is equivalent to 
the code ${\rm ev}\left(\textcolor{blue}{\mathcal{M}^\complement}\right)$.
\end{example}
\begin{figure}[h]
\vskip 0cm
\noindent
\begin{minipage}[t]{0.45\textwidth}
\begin{center}
\begin{tikzpicture}[scale=0.55]
\def\xinitial{1}
\def\xfinal{8}
\def\yinitial{1}
\def\yfinal{2}

%The following defines the vertical lines
\draw [-latex] (\xinitial,\yinitial)node[below left]{0} -- (\xinitial,\yfinal+2)node[above]{$\Z_{\geq 0}$};
\foreach \i in {\xinitial,...,\xfinal}
{
\draw [dashed] (\i+1,\yinitial)node[below]{$\i$} -- (\i+1,\yfinal+1)node[right] {};
}

%The following defines the horizontal lines
\draw [-latex] (\xinitial,\yinitial)node[below left]{0} -- (\xfinal +2,\yinitial)node[right]{$\Z_{\geq 0}$};
\foreach \i in {\yinitial,...,\yfinal}
{
\draw [dashed] (\xinitial,\i+1)node[left]{$\i$} -- (\xfinal+1,\i+1)node[right] {};
}

\foreach \i in {0,1,2}
{\foreach \j in {0,...,2}
{\fill [color=red](\i+1,\j+1) {circle(.2cm)};}}

\foreach \i in {3}
{\foreach \j in {0,1}
{\fill [color=red](\i+1,\j+1) {circle(.2cm)};}}

\foreach \i in {4}
{\foreach \j in {0}
{\fill [color=red](\i+1,\j+1) {circle(.2cm)};}}

\end{tikzpicture}
\vskip 0cm
(a) Monomials of degree at most 4.
\end{center}
\end{minipage}
\begin{minipage}[t]{0.45\textwidth}
\begin{center}
\begin{tikzpicture}[scale=0.55]

\draw [-latex] (8,2)node[above right]{0} -- (-1,2)node[left] {$\Z_{\geq 0}$};
\draw [dashed] (0,1)node[left]{} -- (8,1)node[right] {1};
\draw [dashed] (0,0)node[left]{} -- (8,0)node[right] {2};

\draw [dashed] (0,0)node[below]{} -- (0,2)node[above] {8};
\draw [dashed] (1,0)node[below]{} -- (1,2)node[above] {7};
\draw [dashed] (2,0)node[below]{} -- (2,2)node[above] {6};
\draw [dashed] (3,0)node[below]{} -- (3,2)node[above] {5};
\draw [dashed] (4,0)node[below]{} -- (4,2)node[above] {4};
\draw [dashed] (5,0)node[below]{} -- (5,2)node[above] {3};
\draw [dashed] (6,0)node[below]{} -- (6,2)node[above] {2};
\draw [dashed] (7,0)node[below]{} -- (7,2)node[above] {1};
\draw [-latex] (8,2) -- (8,-1)node[below] {$\Z_{\geq 0}$};

\fill [color=blue](8,2) {circle(.2cm)};
\fill [color=blue](8,1) {circle(.2cm)};
\fill [color=blue](8,0) {circle(.2cm)};
\fill [color=blue](7,2) {circle(.2cm)};
\fill [color=blue](7,1) {circle(.2cm)};
\fill [color=blue](7,0) {circle(.2cm)};
\fill [color=blue](6,2) {circle(.2cm)};
\fill [color=blue](6,1) {circle(.2cm)};
\fill [color=blue](6,0) {circle(.2cm)};
\fill [color=blue](5,2) {circle(.2cm)};
\fill [color=blue](5,1) {circle(.2cm)};
\fill [color=blue](5,0) {circle(.2cm)};
\fill [color=blue](4,2) {circle(.2cm)};
\fill [color=blue](4,1) {circle(.2cm)};
\fill [color=blue](3,2) {circle(.2cm)};

\end{tikzpicture}
\vskip 0.cm
(b) Complement of (a) on $\cX$.
\end{center}
\end{minipage}
\caption{(a) shows the exponents of the set of monomials $\textcolor{red}{\mathcal{M}}$ 
in $\Delta\left(x^{9}, y^{3}\right)$ of degree at most 4. (b) shows the exponents of 
$\textcolor{blue}{\mathcal{M}^\complement}$, the complement of 
$\textcolor{red}{\mathcal{M}}$ on $\cX$. By Theorem~\ref{22.03.15}, the dual code
${\rm ev}\left(\textcolor{red}{\mathcal{M}}\right)^\perp$ is equivalent to 
the code ${\rm ev}\left(\textcolor{blue}{\mathcal{M}^\complement}\right)$.}
\label{22.03.14}
\end{figure}

Recall that the hull of a code $C$ is  $\hull(C):=C\cap C^{\perp}$.  The code $C$ is {\it self-dual} if $C= C^{\perp}$ and {\it self-orthogonal} if $C \subseteq C^{\perp}$. Theorem~\ref{22.03.15} gives a powerful tool for designing self-dual and self-orthogonal codes.
\begin{theorem}\label{22.06.12}
Assume $\cX_u = \left\{P_1,\ldots,P_n \right\} $ and let ${\rm ev}(\mathcal{M})$ be a decreasing norm-trace code.
If the equation $x^2=u$ has a solution $\alpha$ in $\fqr$, then
\[\hull({\bm \lambda} \cdot {\rm ev}(\cM))= {\bm \lambda} \cdot {\rm ev}\left(\cM \cap \cM^\complement \right),\]
where $\lambda_i := \begin{cases} \alpha^{-1} &\textnormal{if the }x\textnormal{-coordinate of }P_i \textnormal{ is nonzero} \\ 1 &\textnormal{ otherwise.} \end{cases}$
\end{theorem}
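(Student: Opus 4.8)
The plan is to reduce the hull computation to the dual description in Theorem~\ref{22.03.15}, exploiting that the scaling vector $\bm\lambda$ is chosen to be a componentwise square root of the scaling vector $\bm\beta$ that appears there. First I would record the elementary interaction between diagonal scaling and duality: for any $\bm\mu \in \fqr^n$ with all entries nonzero and any linear code $C \subseteq \fqr^n$, one has $(\bm\mu \cdot C)^\perp = \bm\mu^{-1}\cdot C^\perp$, where $\bm\mu^{-1} := (\mu_1^{-1},\ldots,\mu_n^{-1})$. This is immediate from $\langle \bm\mu\cdot c, d\rangle = \langle c, \bm\mu\cdot d\rangle$. Applying it with $\bm\mu=\bm\lambda$ and $C={\rm ev}(\cM)$ gives $(\bm\lambda\cdot{\rm ev}(\cM))^\perp = \bm\lambda^{-1}\cdot {\rm ev}(\cM)^\perp$.

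Next I would substitute the dual from Theorem~\ref{22.03.15}, namely ${\rm ev}(\cM)^\perp = \bm\beta\cdot{\rm ev}(\cM^\complement)$ with $\beta_i=u^{-1}$ when the $x$-coordinate of $P_i$ is nonzero and $\beta_i=1$ otherwise. Here the hypothesis $\alpha^2=u$ enters decisively: at points with nonzero $x$-coordinate we have $\lambda_i^2 = \alpha^{-2} = u^{-1} = \beta_i$, and otherwise $\lambda_i^2 = 1 = \beta_i$, so $\bm\lambda^2 = \bm\beta$ componentwise. Consequently $\bm\lambda^{-1}\cdot\bm\beta = \bm\lambda$, whence
\[
(\bm\lambda\cdot{\rm ev}(\cM))^\perp = \bm\lambda^{-1}\cdot\bm\beta\cdot{\rm ev}(\cM^\complement) = \bm\lambda\cdot{\rm ev}(\cM^\complement).
\]

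With both $\bm\lambda\cdot{\rm ev}(\cM)$ and its dual written through the same invertible scaling $\bm\lambda$, the hull becomes
\[
\hull(\bm\lambda\cdot{\rm ev}(\cM)) = \big(\bm\lambda\cdot{\rm ev}(\cM)\big)\cap\big(\bm\lambda\cdot{\rm ev}(\cM^\complement)\big) = \bm\lambda\cdot\big({\rm ev}(\cM)\cap{\rm ev}(\cM^\complement)\big),
\]
the last equality holding because the diagonal isomorphism $\bm\lambda$ commutes with intersection. It then remains to prove the combinatorial identity ${\rm ev}(\cM)\cap{\rm ev}(\cM^\complement) = {\rm ev}(\cM\cap\cM^\complement)$. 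For this I would use that $\varphi$ is an isomorphism, so $\{{\rm ev}(M)\mid M\in\Delta_\prec(I_{\cX_u})\}$ is an $\fqr$-basis of $\fqr^n$; since both $\cM$ and $\cM^\complement$ are subsets of the footprint $\Delta(x^{(q-1)u+1},y^{q^{r-1}})$, the standard fact that $\spn(A)\cap\spn(B)=\spn(A\cap B)$ for subsets $A,B$ of a basis yields the claim.

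Everything above is essentially formal once $\bm\lambda^2=\bm\beta$ is in place; the only steps requiring genuine care are verifying that $\cM^\complement$ indeed lies inside the footprint (so that the basis argument applies), which follows because $0\le(q-1)u-i\le(q-1)u$ and $0\le q^{r-1}-1-j\le q^{r-1}-1$ for every $x^iy^j$ in the footprint, and checking that the complement operation is a bijection of the footprint, so intersecting generating sets of evaluated monomials corresponds exactly to intersecting $\cM$ and $\cM^\complement$. I expect this identification of the lattice of monomial subsets with the lattice of coordinate subspaces to be the main obstacle, the remainder being a direct consequence of Theorem~\ref{22.03.15} and the choice of $\bm\lambda$ as a square root of $\bm\beta$.
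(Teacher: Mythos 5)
Your proposal is correct and takes essentially the same route as the paper: both exploit ${\bm \lambda}\cdot{\bm \lambda}={\bm \beta}$ together with Theorem~\ref{22.03.15} and the identity $({\bm \mu}\cdot C)^\perp={\bm \mu}^{-1}\cdot C^\perp$ to conclude $\left({\bm \lambda}\cdot{\rm ev}(\cM)\right)^\perp={\bm \lambda}\cdot{\rm ev}\left(\cM^\complement\right)$. The only difference is that you explicitly justify the final step ${\rm ev}(\cM)\cap{\rm ev}\left(\cM^\complement\right)={\rm ev}\left(\cM\cap\cM^\complement\right)$ via the basis of evaluated footprint monomials, a detail the paper leaves implicit.
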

\begin{proof}
Denote by ${\bm \lambda}^{-1}$ the vector whose entries are $\lambda_i^{-1}$. By Theorem~\ref{22.03.15}, we have that ${\bm \lambda}{\bm \lambda}={\bm \beta}$, so ${\bm \lambda}={\bm \lambda}^{-1}{\bm \beta}$, where the product between vectors is pointwise. Thus, $\left({\bm \lambda} \cdot {\rm ev}(\cM) \right)^\perp = {\bm \lambda^{-1}} \cdot {\rm ev}\left(\cM \right)^\perp ={\bm \lambda^{-1}} {\bm \beta}\cdot {\rm ev}\left(\mathcal{M}^\complement\right) ={\bm \lambda} \cdot {\rm ev}\left(\mathcal{M}^\complement\right)$.
\end{proof}

\begin{corollary}\label{22.09.13}
Assume $\cX_u = \left\{P_1,\ldots,P_n \right\}$ and the equation $x^2=u$ has a 
solution $\alpha$ in $\fqr$. If $\cM \subseteq \cM^\complement$, then ${\bm 
\lambda} \cdot {\rm ev}(\mathcal{M})$ is a self-orthogonal code, where 
$\lambda_i$ is as in Theorem \ref{22.06.12}. If $\cM = \cM^\complement$, then 
${\bm \lambda} \cdot {\rm ev}(\mathcal{M})$ is a self-dual code.
\end{corollary}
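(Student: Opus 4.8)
The plan is to deduce everything from Theorem~\ref{22.06.12}, using the standard reformulations of self-orthogonality and self-duality in terms of the hull. Recall that a code $C$ is self-orthogonal exactly when $C \subseteq C^\perp$, which is equivalent to $\hull(C) = C \cap C^\perp = C$; and $C$ is self-dual exactly when it is self-orthogonal and $\dim C = n/2$.

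For the self-orthogonal statement, I would argue as follows. Assume $\cM \subseteq \cM^\complement$. Then $\cM \cap \cM^\complement = \cM$, so Theorem~\ref{22.06.12} gives
\[
\hull({\bm \lambda} \cdot {\rm ev}(\cM)) = {\bm \lambda} \cdot {\rm ev}(\cM \cap \cM^\complement) = {\bm \lambda} \cdot {\rm ev}(\cM).
\]
Since $\hull(C) = C \cap C^\perp \subseteq C$ always, the equality $\hull({\bm \lambda} \cdot {\rm ev}(\cM)) = {\bm \lambda} \cdot {\rm ev}(\cM)$ forces ${\bm \lambda} \cdot {\rm ev}(\cM) \subseteq ({\bm \lambda} \cdot {\rm ev}(\cM))^\perp$; that is, ${\bm \lambda} \cdot {\rm ev}(\cM)$ is self-orthogonal.

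For the self-dual statement, I would first note that $\cM = \cM^\complement$ implies $\cM \subseteq \cM^\complement$, so the code is already self-orthogonal by the previous paragraph; it remains to match dimensions. Scaling coordinatewise by the nonzero vector ${\bm \lambda}$ is a bijective linear map, so $\dim({\bm \lambda} \cdot {\rm ev}(\cM)) = \dim {\rm ev}(\cM) = \mid \cM \mid$ by Theorem~\ref{22.06.13}(2). On the other hand, the identity $\mid \cM \mid + \mid \cM^\complement \mid = \mid \Delta(x^{(q-1)u+1}, y^{q^{r-1}}) \mid = \mid \cX_u \mid = n$ recorded in the proof of Theorem~\ref{22.03.15}, together with $\cM = \cM^\complement$, yields $2\mid \cM \mid = n$, hence $\dim({\bm \lambda} \cdot {\rm ev}(\cM)) = n/2$. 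A self-orthogonal code of dimension $n/2$ coincides with its dual, since that dual has dimension $n - n/2 = n/2$ and contains the code; therefore ${\bm \lambda} \cdot {\rm ev}(\cM)$ is self-dual.

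There is no real obstacle here: the content lives entirely in Theorem~\ref{22.06.12}, and the corollary is a short bookkeeping of set inclusions (for self-orthogonality) and cardinalities (for self-duality). The only point requiring a moment of care is the dimension count in the self-dual case, where one must invoke both the cardinality identity $\mid \cM \mid + \mid \cM^\complement \mid = n$ from Theorem~\ref{22.03.15} and the invariance of dimension under coordinatewise scaling by a nonzero vector.
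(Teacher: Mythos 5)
Your proof is correct, and the self-orthogonality half coincides with the paper's own argument: apply Theorem~\ref{22.06.12} with $\cM \cap \cM^\complement = \cM$ to identify the hull with the code, and conclude ${\bm \lambda}\cdot{\rm ev}(\cM) \subseteq ({\bm \lambda}\cdot{\rm ev}(\cM))^\perp$. The only divergence is in the self-dual case, which the paper dismisses as ``analogous'' --- the intended shortcut there is presumably to read off, from the identity $({\bm \lambda}\cdot{\rm ev}(\cM))^\perp = {\bm \lambda}\cdot{\rm ev}(\cM^\complement)$ established in the proof of Theorem~\ref{22.06.12}, that $\cM = \cM^\complement$ forces the code to equal its dual outright. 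You instead upgrade self-orthogonality to self-duality by a dimension count, combining $\dim{\rm ev}(\cM) = \mid\cM\mid$ from Theorem~\ref{22.06.13}(2) with the cardinality identity $\mid\cM\mid + \mid\cM^\complement\mid = n$ from the proof of Theorem~\ref{22.03.15} to get $\dim = n/2$. Both routes are valid; the paper's is one line shorter, while yours has the small merit of making explicit the bookkeeping that the word ``analogous'' hides, and of not needing to reopen the proof of Theorem~\ref{22.06.12} to extract the dual formula.
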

\begin{proof}
If $\cM \subseteq \cM^\complement$, then
$\hull({\bm \lambda} \cdot {\rm ev}(\cM))= {\bm \lambda} \cdot {\rm ev}\left(\cM \cap \cM^\complement \right)=
{\bm \lambda} \cdot {\rm ev}\left(\cM \right)$
by Theorem~\ref{22.06.12}. Thus,
${\bm \lambda} \cdot {\rm ev}\left(\cM \right)=
\hull({\bm \lambda} \cdot {\rm ev}(\cM)) \subseteq \left({\bm \lambda} \cdot {\rm ev}\left(\cM \right)\right)^\perp$.
The case $\cM = \cM^\complement$ is analogous.
\end{proof}
\begin{example}\label{22.03.19} \rm
Take $q=2$, $r=4$, and $u=5$. Figure~\ref{22.03.20}~(a) shows the points of the set $\X_u$. Let $\textcolor{red}{\mathcal{M}}$ be the set of monomials in $\fqr[x,y]$ with degree in $x$ at most $5$ and degree in $y$ at most $4$. The exponents of these monomials are the points in Figure~\ref{22.03.20}~(b). As $5\equiv1$ in $\F_2$, then $\hull({\rm ev}(\textcolor{red}{\cM}))= {\rm ev}\left(\textcolor{red}{\cM} \cap \cM^\complement \right)={\rm ev}(\textcolor{red}{\cM})$ by Theorem~\ref{22.06.12}. Thus, ${\rm ev}(\textcolor{red}{\cM})$ is a self-dual code by Corollary~\ref{22.09.13}.
\end{example}
\begin{figure}[h]
\vskip 0cm
\noindent
\begin{minipage}[t]{0.45\textwidth}
\begin{center}
\begin{tikzpicture}[scale=0.35]
%These are the powers that we want on the x- and y-axis.
\def\xinitial{0}
\def\xfinal{12}
\def\yinitial{0}
\def\yfinal{14}

%The following defines the vertical lines
\draw [-latex] (\xinitial,\yinitial)node[below left]{0} -- (\xinitial,\yfinal+2)node[above]{$\F_{16}$};
\foreach \i in {0,3,6}
{
\draw [dashed] (\i+1,\yinitial)node[below]{$\omega^{\i}$} -- (\i+1,\yfinal+1)node[right] {};
}
\foreach \i in {10,13}
{
\draw [dashed] (\i,\yinitial)node[below]{$\omega^{\i}$} -- (\i,\yfinal+1)node[right] {};
}

%The following defines the horizontal lines
\draw [-latex] (\xinitial,\yinitial)node[below left]{0} -- (\xfinal +2,\yinitial)node[right]{$\F_{16}$};
\foreach \i in {\yinitial,...,\yfinal}
{
\draw [dashed] (\xinitial,\i+1)node[left]{$\omega^{\i}$} -- (\xfinal+1,\i+1)node[right] {};
}

\foreach \j in {-1,0,1,2,4,5,8,10}
{
\fill [color=black](0,\j+1) {circle(.2cm)};
}

\foreach \i in {0,3,6}
{\foreach \j in {3,6,7,9,11,12,13,14}
{\fill [color=black](\i+1,\j+1) {circle(.2cm)};}}

\foreach \i in {10,13}
{\foreach \j in {3,6,7,9,11,12,13,14}
{\fill [color=black](\i,\j+1) {circle(.2cm)};}}

\end{tikzpicture}
\vskip 0cm
(a) Evaluation points
\end{center}
\end{minipage}
\begin{minipage}[t]{0.45\textwidth}
\begin{center}
\begin{tikzpicture}[scale=0.55]

\def\xinitial{1}
\def\xfinal{5}
\def\yinitial{1}
\def\yfinal{7}

%The following defines the vertical lines
\draw [-latex] (\xinitial,\yinitial)node[below left]{0} -- (\xinitial,\yfinal+2)node[above]{$\Z_{\geq 0}$};
\foreach \i in {\xinitial,...,\xfinal}
{
\draw [dashed] (\i+1,\yinitial)node[below]{$\i$} -- (\i+1,\yfinal+1)node[right] {};
}

%The following defines the horizontal lines
\draw [-latex] (\xinitial,\yinitial)node[below left]{0} -- (\xfinal +2,\yinitial)node[right]{$\Z_{\geq 0}$};
\foreach \i in {\yinitial,...,\yfinal}
{
\draw [dashed] (\xinitial,\i+1)node[left]{$\i$} -- (\xfinal+1,\i+1)node[right] {};
}

\foreach \i in {0,...,5}
{\foreach \j in {0,...,3}
{\fill [color=red](\i+1,\j+1) {circle(.2cm)};}}

\end{tikzpicture}
\vskip 0cm
(b) Evaluation monomials
\end{center}
\end{minipage}
\caption{(a) shows the points of the curve $\cX: x^u=y^8+y^4+y^2+y$. Let $\textcolor{red}{\mathcal{M}}$ be the set of monomials whose exponents are the points in (b). The evaluation code ${\rm ev}(\textcolor{red}{\mathcal{M}})$ is a self-dual code over $\F_{16}$.}
\label{22.03.20}
\vskip 0.cm
\end{figure}
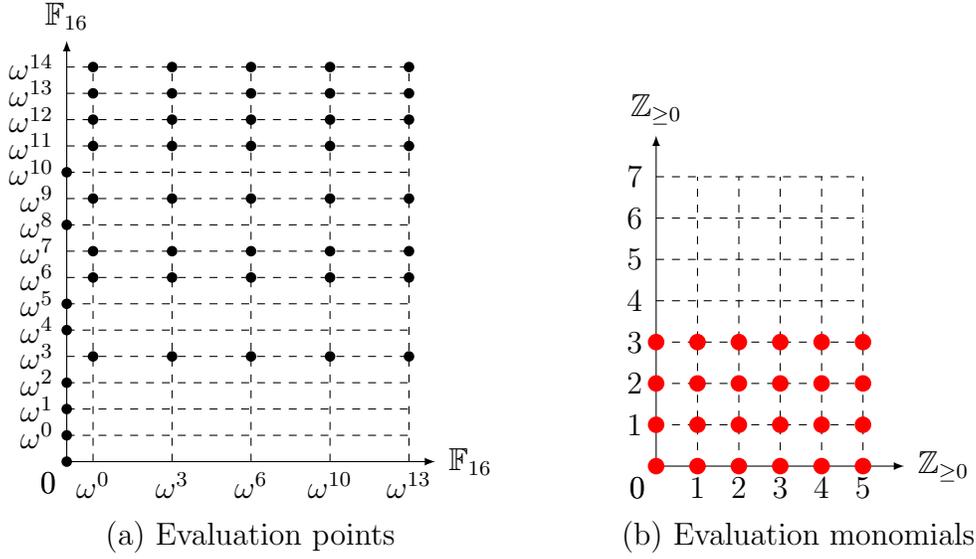

\section{Single Erasure Repair Scheme}\label{erasure}
This section defines a repair scheme that repairs a single erasure for specific 
decreasing norm-trace codes. An element of $\mathbb{F}_{q^r}$ may be thought 
of as a vector in $\mathbb{F}_q^r$. In this theory, the elements of 
$\mathbb{F}_{q^r}$  are called {\em symbols} and the elements of 
$\mathbb{F}_q$ are called {\em subsymbols}. Given a code $C \subset 
\mathbb{F}_{q^r}^n$, a repair scheme is an algorithm that recovers the entry of
any vector of $C$ using the other entries. The {\em bandwidth} $b$ is the 
number of 
subsymbols required by the algorithm to repair the entry. A codeword is defined 
by $n r$ subsymbols, and the fraction $\displaystyle \frac{b}{n r}$  is called 
{\em bandwidth rate}.

Recall that $\Delta\left(x^i, y^j \right)$ denotes the set of monomials which 
are not multiples of either of these two monomials.
Using that $\{ M + I_{\cX_u} \mid M \in \Delta_{\prec}\left(I_{\cX_u}\right) 
\}$ is a 
basis for $\fqr[x,y]/I_{\X_u}$ as an $\fqr$-vector space, we may assume that an 
arbitrary element of ${\rm ev}(\mathcal{M})$
is of the type $\textrm{ev}(f)$, where every monomial that appears in $f$ is in
$\Delta_{\prec}(I_{\cX_u}) = \Delta\left(x^{(q-1)u+1}, y^{q^{r-1}}\right)$. 
Take $n:=((q-1)u+1)q^{r-1}$. Since $\textrm{ev}(f) \in \fqr^{n}$, the element $\textrm{ev}(f)$ depends on $n$ symbols (over $\fqr$) or, equivalently, on $nr$ subsymbols (over $\fq$).
 
\begin{remark}\cite[Definition 2.30 and Theorem 2.40]{Finite_Fields_Book}\label{Dual Bases Remark}
Let $\mathcal B=\{z_1, \dots, z_r\}$ be a basis of $\fqr$ over $\mathbb{F}_q.$ Then there exists a basis  $\{z^\prime_1, \dots, z^\prime_r\}$ of $\fqr$ over $\mathbb{F}_q$, called the dual basis of $\mathcal B$, such that $Tr(z_i z^\prime_j)=\delta_{ij}$ is a delta function and for $\alpha \in \fqr$, \[\alpha = \sum_{i=1}^r Tr(\alpha z_i)z_i^\prime.\]
Thus, determining $\alpha$ is equivalent to finding $Tr(\alpha z_i)$ for $i \in \{ 1, \dots, r \}$.
\end{remark}
\begin{theorem}\label{22.06.08}
Let $\mathcal{M} \subseteq \Delta\left(x^{(q-1)u}, y^{q^{r-1}}\right)$ be a monomial set that is closed under divisibility.
There exists a repair scheme of ${\rm ev}(\mathcal{M})$ for one erasure with bandwidth at most
\[ \mid \cX_u \mid -1 + (u - 1)(r-1).\]
%\[(q^{r-1}+r-1)u(q-1)+(q^{r-1}-1).\]
\end{theorem}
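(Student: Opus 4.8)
The plan is to adapt the trace-based repair idea of Guruswami and Wootters to the extension $\fqr/\fq$, using dual codewords coming from polynomials in $y$ alone. Write the erased coordinate as the value $c_{i^*} = f(P^*)$ at a point $P^* = (\alpha^*, \eta^*) \in \cX_u$, and let $\bm\beta = (\beta_1, \ldots, \beta_n)$ be the scaling vector of Theorem~\ref{22.03.15}. By Remark~\ref{Dual Bases Remark}, after fixing a basis $\{z_1, \ldots, z_r\}$ of $\fqr/\fq$, it suffices to determine the $r$ subsymbols $Tr(z_k\,\beta_{i^*} c_{i^*})$ for $k = 1, \ldots, r$, since $\beta_{i^*}\in\{1, u^{-1}\}$ is a known nonzero scalar and the dual basis then reconstructs $\beta_{i^*}c_{i^*}$, hence $c_{i^*}$.

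First I would produce the repair equations. For each $\nu\in\fqr$ set
\[ g_\nu(y) := \frac{Tr(\nu(y-\eta^*))}{y-\eta^*} = \sum_{i=0}^{r-1}\nu^{q^i}(y-\eta^*)^{q^i-1}, \]
a polynomial in $y$ of degree $q^{r-1}-1$. The hypothesis $\mathcal{M}\subseteq\Delta(x^{(q-1)u}, y^{q^{r-1}})$ is exactly what guarantees that every monomial $x^{(q-1)u}y^j$ with $0\le j\le q^{r-1}-1$ lies in $\Delta(x^{(q-1)u+1}, y^{q^{r-1}})\setminus\mathcal{M}$, so by the description of $\mathcal{M}^\complement$ in Theorem~\ref{22.03.15} one gets $\{1, y, \ldots, y^{q^{r-1}-1}\}\subseteq\mathcal{M}^\complement$. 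Hence $g_\nu\in\spn\mathcal{M}^\complement$ and $\bm\beta\cdot{\rm ev}(g_\nu)\in{\rm ev}(\mathcal{M})^\perp$. Two evaluation facts drive the scheme: $g_\nu(\eta^*)=\nu$, while for $\eta\ne\eta^*$ one has $g_\nu(\eta)=Tr(\nu(\eta-\eta^*))/(\eta-\eta^*)\in\fq\cdot(\eta-\eta^*)^{-1}$. Orthogonality of $\bm\beta\cdot{\rm ev}(g_\nu)$ to the codeword $(c_1, \ldots, c_n)$ gives, after isolating the erased term and applying $Tr$,
\[ Tr(\nu\,\beta_{i^*}c_{i^*}) = -\sum_{j\ne i^*} Tr\big(\beta_j\,g_\nu(\eta_j)\,c_j\big). \]
Taking $\nu = z_k$ produces the $r$ desired subsymbols on the left.

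The bandwidth is the number of $\fq$-subsymbols a helper $j$ must transmit so that the repair node can form $Tr(\beta_j g_{z_k}(\eta_j)c_j)$ for all $k$; this equals $t_j := \dim_{\fq}\spn_{\fq}\{\beta_j g_{z_k}(\eta_j): 1\le k\le r\}$, and the repair node recovers each needed value from the $t_j$ downloaded subsymbols by $\fq$-linearity of $Tr$. I would split the helpers by their $y$-coordinate. If $\eta_j\ne\eta^*$, the elements $\beta_j g_{z_k}(\eta_j)$ are all $\fq$-multiples of the single element $\beta_j(\eta_j-\eta^*)^{-1}$, so $t_j\le 1$. If $\eta_j=\eta^*$ then $g_{z_k}(\eta_j)=z_k$, so $\{\beta_j z_k\}$ spans all of $\fqr$ and $t_j=r$; these are the only costly nodes. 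A point $(\alpha, \eta^*)\in\cX_u$ requires $\alpha^u = Tr(\eta^*) = (\alpha^*)^u$; when $\alpha^*\ne 0$ this value lies in $\fq^*$ and has exactly $u$ preimages under $x\mapsto x^u$ (Lemma~\ref{points}), giving $u-1$ helpers with $\eta_j=\eta^*$, while when $\alpha^*=0$ the only solution is $\alpha=0$, so there are none. Summing, the total bandwidth is at most $(n-u)\cdot 1 + (u-1)\cdot r = n - 1 + (u-1)(r-1)$ when $\alpha^*\ne0$, and at most $n-1$ when $\alpha^*=0$; both are bounded by $|\cX_u| - 1 + (u-1)(r-1)$.

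The main obstacle is the handling of the helpers sharing the erased $y$-coordinate: for them the Guruswami--Wootters saving disappears, since $g_\nu$ takes the full basis value $\nu$ there, so one must download their entire symbols. The crux is therefore the counting via Lemma~\ref{points}, which pins their number to exactly $u-1$ (and to $0$ in the degenerate case $\alpha^*=0$), making the naive cost at these nodes contribute precisely the surplus $(u-1)(r-1)$ over the baseline $n-1$. A secondary point worth stating carefully is why the hypothesis $\mathcal{M}\subseteq\Delta(x^{(q-1)u}, y^{q^{r-1}})$, rather than the full footprint, is needed: it is exactly what forces all pure powers $1, y, \ldots, y^{q^{r-1}-1}$ into $\mathcal{M}^\complement$, so that the dual contains the repair codewords $\bm\beta\cdot{\rm ev}(g_\nu)$.
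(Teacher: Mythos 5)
Your proposal is correct and follows essentially the same route as the paper's proof: the same dual codewords $\bm\beta\cdot{\rm ev}\bigl(Tr(z_k(y-\beta^*))/(y-\beta^*)\bigr)$, justified by the same observation that the hypothesis forces $\{1,y,\ldots,y^{q^{r-1}-1}\}\subseteq\mathcal{M}^\complement$, the same trace identities, and the same split of helper nodes by whether they share the erased $y$-coordinate, with the count of such nodes coming from Lemma~\ref{points}. Your explicit treatment of the degenerate case $\alpha^*=0$ (where the surplus vanishes) is a minor refinement the paper absorbs into the inequality $|\Gamma|\le u$.
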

\begin{proof}
Take ${\cX_u}= \{P_1,\ldots,P_n \}$ and let $\textrm{ev}(f)=(f(P_1),\ldots,f(P_n))$ be an element of ${\rm ev}(\mathcal{M})$. Assume that the coordinate $f(P^*)$ of $\textrm{ev}(f)$ is erased, where $P^*=(\alpha^*,\beta^*) \in \cX_u$.
We define the following polynomials
\begin{equation*}
p_i(y)=
\frac{Tr(z_i(y - \beta^*))}{(y - \beta^*)}
= {z_i} +
z_i^q(y - \beta^*)^{q-1}+ \cdots +
z_i^{q^{r-1}}(y - \beta^*)^{q^{r-1}-1}
\end{equation*}
for $i\in [r]$. We have  $\{1,y,\ldots, y^{q^{r-1} - 1} \} \subseteq \mathcal{M}^\complement$, as $\mathcal{M} \subseteq \Delta\left(x^{(q-1)u}, y^{q^{r-1}}\right)$. The element ${\bm \beta} \cdot  \left(p_i(P_1),\ldots,p_i(P_n) \right)$ is in ${\rm ev}\left( \mathcal{M} \right)^\perp$ for $i\in [r]$ and ${\bm \beta}$ defined in Theorem~\ref{22.03.15}.  Therefore, we obtain the $r$ equations
\begin{equation}\label{21.06.15}
\beta_{P^*}p_{i}(P^*)f(P^*)= -\sum_{\cX_u \setminus\{P^*\}}
\beta_{P} p_{i}(P)f(P),\quad i\in[r].
\end{equation}
As $p_i(P^*)=z_i,$ applying the trace function to both sides of previous equations and employing the linearity of the trace function, we obtain
\[Tr \left(z_i \beta_{P^*} f(P^*)\right)= -\sum_{ \cX_u \setminus\{P^*\}}
Tr \left( \beta_{P} p_{i}(P)f(P) \right),\quad i\in[r].\]
Define the set $ \Gamma = \{(\alpha,\beta)\in \cX_u : \beta = \beta^* \}.$
% \quad \text{ and } \quad \Gamma_2 = \{(s_1,\ldots, s_m)\in \cX_u : s_m \neq s_m^* \}.\]
We have that $p_i(P)=z_i$ for $ P \in \Gamma$.
%Thus, $\left\{\tr \left( p_i(P)f(P) \right) \right\}_{P\in \Gamma}=\left\{\tr \left( z_if(P) \right) \right\}_{P\in \Gamma}.$
For $ P=(\alpha,\beta) \in \cX_u \setminus \Gamma,$ $p_i(P)=\displaystyle \frac{ Tr(z_i(\beta - \beta^*))}{(\beta - \beta^*)}.$ We have that for $i\in[r]$,
\begin{eqnarray*}
\sum_{\cX_u \setminus\{P^*\}} Tr \left( \beta_{P} p_{i}(P)f(P) \right)
&=& \sum_{\Gamma \setminus\{P^*\}} Tr \left( \beta_{P} p_{i}(P)f(P) \right) + 
\sum_{\cX_u \setminus \Gamma} Tr \left( \beta_{P} p_{i}(P)f(P) \right) \\
&=& \sum_{\Gamma \setminus\{P^*\}} Tr \left( \beta_{P} z_i f(P) \right) + 
\sum_{\cX_u \setminus \Gamma} Tr \left( \beta_{P} \frac{Tr(z_i(\beta - \beta^*))}{(\beta - \beta^*)}f(P) \right)\\
&=& \sum_{\Gamma \setminus\{P^*\}} Tr \left( \beta_{P} z_i f(P) \right) + 
\sum_{\cX_u \setminus \Gamma} Tr(z_i(\beta - \beta^*)) Tr \left( \frac{\beta_{P} f(P)}{(\beta - \beta^*)} \right).
\end{eqnarray*}
The element $\beta_{P^*} f(P^*),$ and $f(P^*)$ as a consequence, can be recovered from its $r$ independent traces $Tr (z_i \beta_{P^*} f(P^*))$ by Remark~\ref{Dual Bases Remark}. The traces are obtained by downloading:
\begin{itemize}
\item For each $P\in \Gamma\setminus\{P^*\}$, the $r$ subsymbols $Tr\left(\beta_{P}z_1 f(P)\right), \ldots, Tr\left(\beta_{P}z_r f(P)\right)$.
\item For each $P\in \cX_u\setminus \Gamma$, the subsymbol $\displaystyle Tr\left(\frac{\beta_{P} f(P)}{ (\beta - \beta^*)}\right)$.
\end{itemize}
Hence, the bandwidth is $\displaystyle b = r (\mid\Gamma\mid-1) + \mid\cX_u\setminus \Gamma\mid $ $ \leq
r\left(u-1\right)+ \mid \cX_u \mid - u =\mid \cX_u \mid + (u - 1)(r-1)-1$. 
\end{proof}

Consequently, we obtain the following result for the norm-trace curve.
\begin{corollary}\label{22.06.09}
If $\mathcal{M} \subseteq \Delta\left(x^{q^r}, y^{q^{r-1}-1}\right)$ or
$\mathcal{M} \subseteq \Delta\left(x^{q^r-1}, y^{q^{r-1}}\right)$ is a monomial set that is closed under divisibility, then there exists a repair scheme of the decreasing norm-trace code ${\rm ev}(\mathcal{M})$ for one erasure with bandwidth at most
\[ \mid \cX_u \mid -1 + \left(\frac{q^r-1}{q-1} - 1\right)(r-1).\]
In particular, there exists a repair scheme for the Hermitian decreasing code for one erasure with bandwidth at most
\[ q^3 + q -1.\]
\end{corollary}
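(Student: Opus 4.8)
The plan is to split according to the two hypotheses, using that the norm-trace curve is the special case $u=\tfrac{q^r-1}{q-1}$ of $\cX_u$, so that $(q-1)u=q^r-1$ and $|\cX_u|=q^{2r-1}$. Under this substitution the second hypothesis $\mathcal{M}\subseteq\Delta(x^{q^r-1},y^{q^{r-1}})$ is \emph{verbatim} the hypothesis $\mathcal{M}\subseteq\Delta(x^{(q-1)u},y^{q^{r-1}})$ of Theorem~\ref{22.06.08}, and the conclusion $|\cX_u|-1+(u-1)(r-1)$ of that theorem is exactly the stated bound $|\cX_u|-1+(\tfrac{q^r-1}{q-1}-1)(r-1)$. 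For this case I would simply invoke Theorem~\ref{22.06.08}.

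The first hypothesis $\mathcal{M}\subseteq\Delta(x^{q^r},y^{q^{r-1}-1})$ is genuinely not covered by Theorem~\ref{22.06.08}, since it permits $x$-degree up to $q^r-1=(q-1)u$; moreover the $y$-fibre scheme of that theorem fails here because $\{1,y,\dots,y^{q^{r-1}-1}\}$ need not lie in $\mathcal{M}^\complement$. I would instead rerun the argument of Theorem~\ref{22.06.08} with the roles of $x$ and $y$ interchanged. Writing $P^*=(\alpha^*,\beta^*)$ for the erased point, the first thing to check is that $\{1,x,\dots,x^{q^{r-1}-1}\}\subseteq\mathcal{M}^\complement$: as every monomial of $\mathcal{M}$ has $y$-degree at most $q^{r-1}-2$, each $x^ay^{q^{r-1}-1}$ with $0\le a\le q^r-1$ lies in $\Delta(x^{q^r},y^{q^{r-1}})\setminus\mathcal{M}$, so its complement partner $x^{q^r-1-a}$ belongs to $\mathcal{M}^\complement$. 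Fixing a basis $z_1,\dots,z_r$ of $\fqr$ over $\fq$, I set $q_i(x)=\tfrac{Tr(z_i(x-\alpha^*))}{x-\alpha^*}=\sum_{k=0}^{r-1}z_i^{q^k}(x-\alpha^*)^{q^k-1}$, a polynomial in $x$ of degree $q^{r-1}-1$ with $q_i(\alpha^*)=z_i$; lying in the span of $\mathcal{M}^\complement$, the vector ${\bm\beta}\cdot(q_i(P_1),\dots,q_i(P_n))$ belongs to ${\rm ev}(\mathcal{M})^\perp$ by Theorem~\ref{22.03.15}.

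Orthogonality against an arbitrary ${\rm ev}(f)\in{\rm ev}(\mathcal{M})$ then yields, for each $i$, the identity $Tr(z_i\beta_{P^*}f(P^*))=-\sum_{P\neq P^*}Tr(\beta_Pq_i(P)f(P))$. Splitting the right-hand sum over the $x$-fibre $\Gamma'=\{(\alpha^*,\beta)\in\cX_u\}$ and using $Tr(z_i(\alpha-\alpha^*))\in\fq$ for $P=(\alpha,\beta)\notin\Gamma'$, each such $P$ contributes the single $\fq$-subsymbol $Tr\big(\beta_Pf(P)/(\alpha-\alpha^*)\big)$, reused across all $i$, while each $P\in\Gamma'\setminus\{P^*\}$ contributes the $r$ subsymbols $Tr(z_1\beta_Pf(P)),\dots,Tr(z_r\beta_Pf(P))$. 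Recovering $\beta_{P^*}f(P^*)$, and hence $f(P^*)$, from its $r$ traces via Remark~\ref{Dual Bases Remark}, the bandwidth is $r(|\Gamma'|-1)+(|\cX_u|-|\Gamma'|)$. Since the norm-trace curve has exactly $q^{r-1}$ points above any fixed $x$-coordinate (the equation $Tr(\beta)=N(\alpha^*)$ has $q^{r-1}$ roots), $|\Gamma'|=q^{r-1}$, giving bandwidth $|\cX_u|-1+(r-1)(q^{r-1}-1)$, which is at most the stated bound because $q^{r-1}\le\tfrac{q^r-1}{q-1}$ (in fact strictly better for $r\ge2$). The Hermitian assertion is then the specialization $r=2$, $u=q+1$, $|\cX_u|=q^3$, where the bound reads $q^3-1+q=q^3+q-1$.

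The only real obstacle is conceptual rather than computational: recognizing that the first hypothesis forces the $x$-fibre scheme in place of the $y$-fibre scheme of Theorem~\ref{22.06.08}. Once that swap is identified, the sole new verifications are the membership $\{1,x,\dots,x^{q^{r-1}-1}\}\subseteq\mathcal{M}^\complement$ and the fibre count $|\Gamma'|=q^{r-1}$; every remaining step mirrors the proof of Theorem~\ref{22.06.08}.
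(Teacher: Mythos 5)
Your proposal is correct, and it is actually more complete than the paper's own proof. The paper disposes of the corollary in one line: ``this is a consequence of Theorem~\ref{22.06.08} for the particular case $u=\frac{q^r-1}{q-1}$,'' which, taken literally, only covers the second hypothesis $\mathcal{M}\subseteq\Delta\left(x^{q^r-1},y^{q^{r-1}}\right)$ --- exactly the specialization you carry out. You are right that the first hypothesis $\mathcal{M}\subseteq\Delta\left(x^{q^r},y^{q^{r-1}-1}\right)$ is \emph{not} literally an instance of the theorem (it permits $x^{q^r-1}=x^{(q-1)u}\in\mathcal{M}$, in which case the theorem's key membership $\{1,y,\dots,y^{q^{r-1}-1}\}\subseteq\mathcal{M}^\complement$ can fail), and the paper leaves the required symmetric argument implicit. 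Your $x$-fibre version fills that gap correctly: the verification that $\{1,x,\dots,x^{q^{r-1}-1}\}\subseteq\mathcal{M}^\complement$ via the complements of $x^ay^{q^{r-1}-1}$, the check-polynomial $q_i(x)=Tr(z_i(x-\alpha^*))/(x-\alpha^*)$ of degree $q^{r-1}-1$ with $q_i(\alpha^*)=z_i$, the fibre count $|\Gamma'|=q^{r-1}$ from Lemma~\ref{points}, and the resulting bandwidth $|\cX_u|-1+(r-1)(q^{r-1}-1)$ are all sound, and the final comparison $q^{r-1}\le\frac{q^r-1}{q-1}$ shows this case in fact satisfies a strictly better bound than the one stated (for $r\ge 2$, $q>1$). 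The Hermitian specialization is also computed correctly. In short: for the second hypothesis your argument coincides with the paper's; for the first you supply the missing mirror-image of the proof of Theorem~\ref{22.06.08}, which both justifies the claim and sharpens it.
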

\begin{proof}
This is a consequence of Theorem~\ref{22.06.08} for the particular case when $u=\frac{q^r-1}{q-1}$. The Hermitian case is obtained when $r=2$.
\end{proof}

Jin et al. introduced in~\cite{JLX} a repair scheme for single erasures of algebraic geometry codes. In particular,~\cite[Theorem 3.3]{JLX} repairs a single erasure on one-point AG codes defined over the curve $\cX_u$, which can also be considered as monomial decreasing norm-trace codes~\cite{Eduardo_polar}.  Both schemes,~\cite[Theorem 3.3]{JLX} and~Theorem~\ref{22.06.08}, have restrictions and can repair codes with up to a maximum dimension.
One of the main advantages of Theorem~\ref{22.06.08} is the ability to repair single erasures on codes with a higher dimension that use the rational points of the curve $\cX_u$ as evaluation points. Indeed, consider the case where we want to repair an erasure on a monomial decreasing norm-trace code ${\rm ev}(\mathcal{M})$. By Theorem~\ref{22.06.13}, the length of the code ${\rm ev}(\mathcal{M})$ is $n = \mid \cX_u \mid = ((q - 1)u + 1)q^{r-1}$. By the hypothesis of Theorem~\ref{22.06.08}, the maximum dimension where the repair scheme can be applied is when $\mathcal{M} = \Delta\left(x^{(q-1)u}, y^{q^{r-1}}\right)$, where the dimension is
\begin{equation}\label{22.09.11}
k_{ev}:=(q - 1)uq^{r-1}=\mid \cX_u \mid-q^{r-1}.
\end{equation}

Now, consider the case where we want to repair an erasure on a one-point AG 
code over the curve $\cX_u$. The curve $\cX_u$ has genus 
$\mathfrak{g}:=\frac{(u-1)(q^{r-1}-1)}{2}$ (see \cite[Thm. 13]{miura}). In the 
context of~\cite[Theorem 3.3]{JLX}, the maximum dimension of the one-point AG 
code where the repair scheme can be applied is when $m = \mid \cX_u \mid - 
(q-1)(\mathfrak{g}+1)$, which implies that the dimension would be
\begin{equation}\label{22.09.12}
k_{AG}:=m-\mathfrak{g} = \mid \cX_u \mid - (q-1)(\mathfrak{g}+1) - \mathfrak{g} =  \mid \cX_u \mid - q(\mathfrak{g}-1)+1.
\end{equation}
\begin{example}
Taking $u=\frac{q^r-1}{q-1}$, we have that $\mathfrak{g}:=\frac{(u-1)(q^{r-1}-1)}{2}=\frac{\left(\frac{q^r-1}{q-1}-1\right)(q^{r-1}-1)}{2}$. From Equation~\ref{22.09.12}, $k_{AG}=\mid \cX_u \mid - q(\mathfrak{g}-1)+1 = \mid \cX_u \mid - \frac{1}{2}q^{2r-1} + \text{lower terms}$. As $k_{ev} = \mid \cX_u \mid-q^{r-1}$ in Equation~\ref{22.09.11}, we can see that there are values of $q$ and $r$ for which $k_{eq} > k_{AG}.$
\end{example}
\begin{example}
Taking $u=\frac{q^r-1}{q-1}$, we have that $\mathfrak{g}:=\frac{(u-1)(q^{r-1}-1)}{2}=\frac{\left(\frac{q^r-1}{q-1}-1\right)(q^{r-1}-1)}{2}$. From Equation~\ref{22.09.12}, $k_{AG}=\mid \cX_u \mid - q(\mathfrak{g}-1)+1 = \mid \cX_u \mid - \frac{1}{2}q^{2r-1} + \text{lower terms}$. As $k_{ev} = \mid \cX_u \mid-q^{r-1}$ in Equation~\ref{22.09.11}, we can see that there are values of $q$ and $r$ for which $k_{eq} > k_{AG}.$
\end{example}

We close this section by finding the maximum rate that a monomial decreasing norm-trace code ${\rm ev}(\mathcal{M})$ would have when the repair scheme of Theorem~\ref{22.06.08} can be applied. As  $\mid \cX_u \mid = ((q - 1)u + 1)q^{r-1}$, we can see that we can repair an erasure on a monomial decreasing norm-trace code ${\rm ev}(\mathcal{M})$ when $\mathcal{M} \subseteq \Delta\left(x^{(q-1)u}, y^{q^{r-1}}\right)$. Thus, we have the following bound for the rate of the code:
\begin{equation}\label{22.09.10}
\text{Rate}({\rm ev}(\mathcal{M})) \leq \frac{(q - 1)uq^{r-1}}{((q - 1)u + 1)q^{r-1}}=1-\frac{1}{(q - 1)u + 1},
\end{equation}
where the inequality is tight when $\mathcal{M} = \Delta\left(x^{(q-1)u}, y^{q^{r-1}}\right)$. In the particular case where $u=\frac{q^r-1}{q-1}$, the inequality in~\ref{22.09.10} becomes:
\[\text{Rate}({\rm ev}(\mathcal{M})) \leq 1-\frac{1}{q^r}.\]

\section*{Conclusion} \label{conclusion}
This work focuses on decreasing norm-trace codes, which are evaluation codes defined by a set of monomials closed under divisibility and the rational points of the extended norm-trace curve. We used Gr\"obner basis theory and indicator functions to find the basic parameters of these codes: length, dimension, minimum distance, and dual code. By exploiting the basic parameters, we gave conditions over the set of monomials, so a decreasing norm-trace code is a self-orthogonal or a self-dual code. We presented a repair scheme for a single erasure on a decreasing norm-trace code that repairs codes with higher rates than the AG codes over the norm-trace curve.

\bibliography{glm_bib_2}{}
\bibliographystyle{abbrv}

\end{document}